\documentclass[conference]{IEEEtran}
\IEEEoverridecommandlockouts
\usepackage{amsmath,amssymb,amsfonts,amsthm}
\usepackage{nccmath}
\usepackage{algorithmic}

\usepackage{cite}

\usepackage{graphicx}
\usepackage{xcolor}

\usepackage{textcomp}

\usepackage{newtxtext}
\usepackage[varg]{newtxmath}
\usepackage{comment}

\usepackage{hyperref}

\newcommand{\Prb}{\mathsf{P}}\newcommand{\Exp}{\mathsf{E}}

\newcommand{\dd}{\mathrm{d}}\newcommand{\ee}{\mathrm{e}}
\newcommand{\R}{\mathbb{R}}
\newcommand{\N}{\mathbb{N}}

\newcommand{\bm}[1]{\boldsymbol{#1}}

\newcommand{\argmax}{\mathop{\rm arg~max}\limits}
\newcommand{\argmin}{\mathop{\rm arg~min}\limits}

\newtheorem{proposition}{Proposition}
\newtheorem{theorem}{Theorem}
\newtheorem{corollary}{Corollary}
\newtheorem{lemma}{Lemma}
\newtheorem{remark}{Remark}

\newtheorem{approximation}{Approximation}

\def\BibTeX{{\rm B\kern-.05em{\sc i\kern-.025em b}\kern-.08em
    T\kern-.1667em\lower.7ex\hbox{E}\kern-.125emX}}
\begin{document}

\title{Dynamic Modeling of Target Cell Location for Mobility Robustness Analysis in Cellular Networks: Technical Report
%\thanks{
%This work was supported in part by a research grant from the Okasan-Kato Foundation, Tsu, Japan.
%The author would like to thank Prof. Naoto Miyoshi for valuable discussions.
%}
}

%\author{\IEEEauthorblockN{1\textsuperscript{st} Given Name Surname}
%\IEEEauthorblockA{\textit{dept. name of organization (of Aff.)} \\
%\textit{name of organization (of Aff.)}\\
%City, Country \\
%email address or ORCID}
%\and
%\IEEEauthorblockN{2\textsuperscript{nd} Given Name Surname}
%\IEEEauthorblockA{\textit{dept. name of organization (of Aff.)} \\
%\textit{name of organization (of Aff.)}\\
%City, Country \\
%email address or ORCID}
%}

\author{
    \IEEEauthorblockN{Kiichi Tokuyama}
    \IEEEauthorblockA{
    Graduate School of Engineering, Mie University, Japan\\
    tokuyama@info.mie-u.ac.jp
    }
}

\maketitle

\begin{abstract}
Mobility robustness optimization (MRO) requires an appropriate selection
of handover (HO) parameters such as the time-to-trigger (TTT) and offset
margin to balance HO failures and ping-pong HOs.
Existing stochastic geometry-based analyses for MRO have treated 
the angular position of the target base station (BS) as uniformly distributed 
over a feasible region.
However, this treatment does not explicitly capture the spatial
distribution of the target BS dynamically selected as a user equipment
(UE) moves through the network.
In this paper, we develop a stochastic geometry-based analytical
framework for MRO in sub-6 GHz cellular networks.
We derive the distribution of the HO triggering time and the spatial
distribution of the dynamically selected target BS under straight-line
UE mobility.
Based on these distributions, we formulate too-late HO and ping-pong HO
events as mutually exclusive events and analytically derive
their probabilities.
Numerical results validate the analysis, demonstrate improved accuracy
over the conventional uniform-angle model, and reveal the tradeoff
between the two HO events and the dependence of the optimal TTT on BS
density.
\end{abstract}

%\begin{abstract}
%This document is a model and instructions for \LaTeX.
%This and the IEEEtran.cls file define the components of your paper [title, text, heads, etc.]. *CRITICAL: Do Not Use Symbols, Special Characters, Footnotes, 
%or Math in Paper Title or Abstract.
%\end{abstract}

\begin{IEEEkeywords}
Mobility robustness optimization, target cell, handover failure, ping-pong handover, stochastic geometry 
\end{IEEEkeywords}

\section{Introduction}

%Mobile cellular communications have continuously evolved from 5G New Radio (NR) toward 6G, while terrestrial cellular networks remain an important communication infrastructure.
%Ensuring seamless and stable communications for mobile user equipments (UEs) therefore remains an important research challenge.
%
%Mobility robustness optimization (MRO), a functionality of self-organizing networks, has been extensively studied to automatically adjust handover (HO) control parameters such as Time-to-Trigger (TTT), Hysteresis, and A3 offset \cite{nguyen2018mobility, tashan2022mobility}.
%Smaller HO parameter values generally reduce handover failures (HOFs) at the cost of increasing ping-pong (PP) HOs, whereas larger values suppress PP HOs while increasing HOFs.
%Appropriate parameter optimization is therefore essential to balance these conflicting HO events.

Ensuring seamless and stable communications for mobile user equipment (UEs) remains an important challenge in terrestrial cellular networks, for which mobility robustness optimization (MRO) plays a key role. As a functionality of self-organizing networks, MRO has been extensively studied to automatically adjust handover (HO) control parameters such as Time-to-Trigger (TTT), Hysteresis, and A3 offset \cite{nguyen2018mobility, tashan2022mobility}. Smaller HO parameter values generally reduce handover failures (HOFs) at the cost of increasing ping-pong (PP) HOs, whereas larger values suppress PP HOs while increasing HOFs. Appropriate parameter optimization is therefore essential to balance these conflicting HO events.

Various MRO approaches have been investigated to optimize HO parameters while addressing the tradeoff between HOFs and PP HOs. These include machine learning-based optimization methods \cite{nguyen2021machine,Karmakar2023mobility}, theoretical approaches based on Markov chains \cite{guidolin2016context} and Stochastic Petri Nets \cite{zheng2025directional}, and deterministically geometric approaches that model cell boundaries and UE mobility to analyze HOF and PP rates \cite{perez2012theoretical,vasudeva2017analysis,nguyen2020geometry}. 
%In particular, stochastic geometry based on Poisson point processes (PPPs) has been widely employed to evaluate HO performance in spatially random cellular networks \cite{...}. By capturing the spatial distribution of BSs and the mobility characteristics of UEs, stochastic geometry provides a powerful analytical framework for HO performance evaluation and offers useful design insights for MRO.

Stochastic geometry has also been widely employed for HO performance analysis in spatially random cellular networks \cite{lima2014modeling,salehi2021stochastic,xu2017modeling,zhou2023heterogeneous,guo20233d,wei2023equivalent,wei2024time}.
Earlier studies investigated HOFs and PP HOs under various network configurations. 
In \cite{lima2014modeling}, the HOF rate was analytically derived for UEs moving along straight trajectories in small-cell networks, while \cite{salehi2021stochastic} evaluated the PP rate by using the serving-cell sojourn-time distribution in multi-tier heterogeneous networks (HetNets). 
These studies considered either the HOF rate or the PP rate.
In \cite{xu2017modeling}, both HOF and PP rates were analytically derived
for two-tier HetNets. This analytical framework was
subsequently extended to three-tier HetNets with traffic
hotspots \cite{zhou2023heterogeneous} and two-tier HetNets with cellular-connected unmanned aerial vehicles (UAVs) \cite{guo20233d}. 
However, the studies in \cite{xu2017modeling, zhou2023heterogeneous, guo20233d} characterized HOFs mainly based on geometric relationships between the serving and target BSs, without explicitly accounting for aggregate interference from surrounding BSs.

More recently, frameworks for HOF analysis incorporating aggregate interference have been developed. In \cite{wei2023equivalent}, both HOF and PP rates were analytically derived for multi-tier UAV networks by transforming UAV altitudes into equivalent distances on the ground plane, thereby establishing an analytical framework equivalent to that for terrestrial cellular networks. This framework was further extended to UAV-assisted multi-tier HetNets with channel fading \cite{wei2024time}. 
However, in \cite{wei2023equivalent,wei2024time}, the angular position of the target BS, conditioned on its distance from the HO triggering location, is treated as uniformly distributed over the feasible region.
This treatment does not explicitly capture the target BS distribution induced by UE mobility.
To the best of our knowledge, no analytical framework has characterized this mobility-induced target BS distribution and evaluated its impact on HO performance for MRO.

%Various MRO approaches have been investigated, including machine learning-based optimization \cite{nguyen2021machine,Karmakar2023mobility} and analytical approaches based on Markov chains \cite{guidolin2016context}, Stochastic Petri Nets \cite{zheng2025directional}, and deterministic geometric models \cite{perez2012theoretical,vasudeva2017analysis,nguyen2020geometry}.
%Stochastic geometry has also been widely employed for HO performance analysis in spatially random cellular networks \cite{lima2014modeling,salehi2021stochastic,xu2017modeling,zhou2023heterogeneous,guo20233d,wei2023equivalent,wei2024time}.
%
%While earlier studies analyzed HOFs and/or PP HOs mainly based on geometric relationships between serving and target BSs \cite{lima2014modeling,salehi2021stochastic,xu2017modeling,zhou2023heterogeneous,guo20233d}, recent studies \cite{wei2023equivalent,wei2024time} incorporated aggregate interference from surrounding BSs into the HOF analysis. In particular, these studies analytically derived both HOF and PP rates for multi-tier unmanned aerial vehicle (UAV) networks by transforming UAV altitudes into equivalent distances on the ground plane, thereby establishing an analytical framework analogous to that for terrestrial cellular networks.

In this paper, we develop a stochastic geometry-based analytical framework for HO performance evaluation in Sub-6 GHz cellular networks.
The main contributions are summarized as follows:
\begin{itemize}
    \item
    We analytically derive the mobility-induced target BS distribution, revealing its generally non-uniform angular distribution, and use it as the basis for subsequent HO performance analysis.
    \item
    We formulate too-late HOs, a type of HOF \cite{nguyen2018mobility, nguyen2020geometry}, and PP HOs as mutually exclusive events and analytically derive their probabilities while accounting for aggregate interference from surrounding BSs.
    \item
    Numerical results demonstrate the non-negligible impact of the derived target BS distribution through comparison with a baseline model. They also characterize the tradeoff between the too-late HO and PP rates with respect to TTT and identify the TTT value that minimizes their sum.
\end{itemize}

\section{System Model}

\subsection{Sub-6 GHz Cellular Networks}

We consider a single-tier cellular network where BS locations follow a homogeneous Poisson point process (PPP) $\Phi=\{X_1,X_2,\ldots\}$ on $\R^2$ with intensity $\lambda>0$.
Each BS $X_k\in\R^2$ has its corresponding cell $k\in\N$.
All BSs operate on the same carrier frequency with omnidirectional antennas and constant transmit power.
The omnidirectional assumption has been widely adopted in the literature for sub-6 GHz communications, e.g., \cite{shi2019coverage,muhammad2021stochastic}.
The downlink channels follow power-law path loss with exponent $\beta>2$ and i.i.d. Rayleigh block fading \cite{haenggi2012stochastic}, while shadowing is ignored.

The received signal power at a UE located at $\bm{u}_t\in\R^2$ at time $t\geq0$ from cell $k$ is
\begin{equation*}%\label{eq:received_signal_power}
  H_{k,n(t)}\|X_k-\bm{u}_t\|^{-\beta},
\end{equation*}
where $n(t)$ denotes the coherence-interval index and $\{H_{k,n(t)}\}$ are i.i.d. unit-mean exponential random variables.
The transmit power is normalized to unity.
Assuming that every BS serves at least one UE, the downlink signal-to-interference-plus-noise ratio (SINR) from serving cell $i$ is
\begin{equation}\label{eq:SINR}
  \mathsf{SINR}_{i}(t)
  =
  \frac{H_{i,n(t)}\|X_i-\bm{u}_t\|^{-\beta}}
  {\sum_{j\in\N\backslash\{i\}}H_{j,n(t)}
  \|X_j-\bm{u}_t\|^{-\beta}+\sigma^2},
\end{equation}
where $\sigma^2\geq0$ is the noise power.

\subsection{RRC Handover Procedures}

We model the radio resource control (RRC) procedures relevant to HO execution and radio link failure (RLF) detection, which are essentially common to 4G LTE and 5G NR.

\subsubsection{Triggering Condition of Event A3}

We focus on Event A3, which compares the measured signal strengths of the serving and neighboring cells.
Assuming perfect measurement filtering\footnote{Perfect Layer-1 and Layer-3 filtering is commonly assumed
in the literature, e.g., \cite{xu2017modeling,guo20233d}.}, the reference signal received power (RSRP) from cell $k$, expressed in dBm, is
\begin{equation}\label{eq:RSRP_def}
  \mathsf{RSRP}_k(t)
  =10\log_{10}\|X_k-\bm{u}_t\|^{-\beta}.
\end{equation}
According to \cite{3gppTS36331}, for serving cell $i$ and neighboring cell $j$, the triggering condition of Event A3 (hereafter, the A3 condition) can be written as
\begin{equation}\label{eq:condition_eventA3}
  \mathsf{RSRP}_j(t)+\mathrm{CIO}_j-\mathrm{Hys}
  \geq
  \mathsf{RSRP}_i(t)+\mathrm{CIO}_i+\mathrm{Off_{A3}},
\end{equation}
where $\mathrm{Hys}$ and $\mathrm{Off_{A3}}$ take values in $[0,15]~\mathrm{dB}$ and $[-15,15]~\mathrm{dB}$, respectively.
The frequency-specific offsets are removed in \eqref{eq:condition_eventA3} since we assume the same carrier frequency.

Assuming identical $\mathrm{CIO}_k$ for all cells and defining
\begin{equation*}%\label{eq:effective_offset}
  \mathsf{Off}=\mathrm{Hys}+\mathrm{Off_{A3}},
\end{equation*}
the A3 condition becomes
\begin{align}\label{eq:trigger_cond_equiv}
  \hat{a}\|X_i-\bm{u}_t\|\geq\|X_j-\bm{u}_t\|,
\end{align}
where we assume $\mathsf{Off}<0$ for analytical tractability and
\begin{equation}\label{eq:a_hat}
  \hat{a}=10^{-\frac{\mathsf{Off}}{10\beta}}>1.
\end{equation}

\subsubsection{RRC Connection Re-establishment}

We focus on a UE initially undergoing RRC connection re-establishment and assume that it immediately associates with the cell providing the largest RSRP.
The initial serving cell $I$ is thus
\begin{align}\label{eq:initial_serving_cell_t0}
  I=\argmax_{k\in\N}\mathsf{RSRP}_k(0).
\end{align}
We define the initial non-triggering event as
\begin{align}
  \mathcal{N}_0
  =
  \left\{
    \hat{a}\|X_I-\bm{u}_0\|<\|X_j-\bm{u}_0\|,
    \ \forall j\in\N\backslash\{I\}
  \right\},
  \label{eq:initial_non-triggering_event}
\end{align}
under which the A3 condition does not hold at $t=0$ for any other cells.

\subsubsection{TTT-based HO Execution}

Let $\Delta_{\mathsf{T}}>0$ denote the common TTT duration.
For the initial serving cell $I$ and neighboring cell $j$, define the first A3 triggering time $\tilde{T}_{I,j}$ and the triggering time $T_{I,j}$ leading to TTT expiration as
\begin{align}
  \tilde{T}_{I,j}
  &=
  \inf\bigl\{\tau\geq0:
  \hat{a}\|X_I-\bm{u}_\tau\|\geq\|X_j-\bm{u}_\tau\|\bigr\},
  \label{eq:TTT_trigger_Sj}
\\
  T_{I,j}
  &=
  \inf\bigl\{\tau\geq0:
  \hat{a}\|X_I-\bm{u}_s\|\geq\|X_j-\bm{u}_s\|
  \ \forall s\in[\tau,\tau+\Delta_{\mathsf{T}})\bigr\},
  \label{eq:TTT_expire_ij}
\end{align}
where they are set to infinity if the corresponding sets are empty.
The first neighboring cell and the target cell are defined as
\begin{align}
  \mathsf{f}(I)
  &\in\argmin_{j\in\N\backslash\{I\}}\tilde{T}_{I,j},
  \label{eq:ho_time_cell_ini}
\\
  \mathsf{t}(I)
  &\in\argmin_{j\in\N\backslash\{I\}}T_{I,j}.
  \label{eq:ho_time_cell_fin}
\end{align}
The HO completion time from $I$ to $\mathsf{t}(I)$ is given by $T_{I,\mathsf{t}(I)} + \Delta_{\mathsf{T}}$, and
the target BS location is denoted by $X_{\mathsf{t}(I)}$.

\subsubsection{Out-of-sync Indication and RLF Detection}

We abstract the N310/T310-based RLF mechanism \cite{3gppTS36331} by an RLF timer of duration $\Delta_{\mathsf{F}}>0$.
An out-of-sync indication is generated when the radio link quality falls below the threshold $Q_{\mathsf{F}}$, corresponding to $Q_{\mathrm{out}}$ in 3GPP TS 36.133 \cite{3gppTS36133}.
Conditioned on $X_I$ and $X_{\mathsf{t}(I)}$, we define the radio link quality\footnote{A similar use of the conditional expectation of the SINR for
characterizing the radio link quality appears in the analysis of \cite{wei2023equivalent}.} as
\begin{equation}\label{eq:downlink_RLQ}
  \rho(t\mid X_I,X_{\mathsf{t}(I)})
  =
  \Exp[\mathsf{SINR}_I(t)
  \mid\mathcal{N}_0,X_I,X_{\mathsf{t}(I)}].
\end{equation}
The RLF detection time is then
\begin{align}
\label{eq:RLF_expire_i}
  F_I
  =
  \inf\bigl\{
    \tau\geq\Delta_{\mathsf{F}}:
    \rho(s\mid X_I,X_{\mathsf{t}(I)})<Q_{\mathsf{F}},
    \ \forall s\in[\tau-\Delta_{\mathsf{F}},\tau)
  \bigr\},
\end{align}
where $F_I=\infty$ if the set is empty.
The initial in-sync event is
\begin{equation}
  \mathcal{S}_0
  =
  \{\rho(0\mid X_I,X_{\mathsf{t}(I)})\geq Q_{\mathsf{F}}\},
  \label{eq:initial_in-sync_event}
\end{equation}
under which no out-of-sync indication is generated at time $0$.

\subsection{UE Mobility}

We assume that each UE moves in a straight line with a constant velocity on $\R^2$.
By the spatial stationarity and isotropy of the network model, we place the typical UE that performs RRC connection re-establishment at $t = 0$ at the origin $\bm{o} = (0, 0) \in \R^2$ and assume that it moves along the horizontal axis without loss of generality.
%We refer to this UE as the typical UE.
%
Then, we have $\bm{u}_t = (tv, 0)$, where $v > 0$ denotes the speed of the typical UE.

%By stationarity and isotropy, we place the typical UE at the origin at $t=0$ and assume straight-line motion along the horizontal axis with constant speed $v>0$, i.e.,
%$\bm{u}_t=(tv,0)$.

\subsection{HO Events with Initial Conditions}\label{subsec:ho_event}

Following \cite{nguyen2020geometry}, we consider too-late and ping-pong HO events.

\subsubsection{Too-late HO Event}

A too-late HO occurs when RLF detection precedes HO completion, represented by
\begin{equation}\label{eq:ev_too_late}
  \mathcal{L}
  =
  \mathcal{N}_0\cap\mathcal{S}_0
  \cap
  \left\{
  T_{I,\mathsf{t}(I)}+\Delta_{\mathsf{T}}\geq F_I
  \right\}.
\end{equation}

\subsubsection{Ping-pong HO Event}

A ping-pong HO occurs when, immediately after the HO from $I$ to $\mathsf{t}(I)$, the reverse A3 condition toward $I$ remains satisfied for $\Delta_{\mathsf{T}}$.
The corresponding event is
\begin{align}\label{eq:ev_ping_pong}
  \mathcal{P}
  =
  \mathcal{N}_0\cap\mathcal{S}_0
  \cap
  \left\{
  T_{\mathsf{t}(I),I}
  =
  T_{I,\mathsf{t}(I)}+\Delta_{\mathsf{T}}<F_I
  \right\},
\end{align}
where
\begin{align}\label{eq:TTT_expire_tI_to_I}
  T_{\mathsf{t}(I),I}
  =
  \inf\bigl\{\tau\geq T_{I,\mathsf{t}(I)}+\Delta_{\mathsf{T}}:\,
  &\hat{a}\|X_{\mathsf{t}(I)}-\bm{u}_s\|
  \geq\|X_I-\bm{u}_s\|
  \nonumber\\
  &\forall s\in[\tau,\tau+\Delta_{\mathsf{T}})\bigr\}.
\end{align}

\begin{remark}
For $\Delta_{\mathsf{T}}<\Delta_{\mathsf{F}}$, the events
$\mathcal{L}$ and $\mathcal{P}$ reduce to the too-late and PP HO events in
\cite{nguyen2020geometry}, respectively, except that the initial conditions
$\mathcal{N}_0$ and $\mathcal{S}_0$ are additionally imposed.
\end{remark}

\section{Distribution of the Cells}

%In this section, we analyze the statistical characteristics of the cells surrounding the typical UE needed to analyze the events $\mathcal{L}$ and $\mathcal{P}$ defined in the preceding section.
In this section, we investigate the statistical distributions of the cells surrounding the typical UE.
These results are used to analyze the events $\mathcal{L}$ and $\mathcal{P}$ defined in the preceding section.
%In this section, we investigate the distributions of the cells surrounding the typical UE, with a view to analyzing the events $\mathcal{L}$ and $\mathcal{P}$ defined in the preceding section.
%
%Here, the location of cell $i \in \N$ refers to the location of $X_i \in \Phi$.
%
The goal of this section is to characterize $X_{\mathsf{t}(I)} \in \Phi$, i.e., the BS location of the target cell $\mathsf{t}(I)$, under the initial non-triggering event $\mathcal{N}_0$.
% via the triggering time $T_{I, \mathsf{t}(I)}$ in \eqref{eq:TTT_expire_ij}. 
%Here, the initial serving cell $I$ is given in \eqref{eq:initial_serving_cell_t0}.
As part of the analysis, the distribution of the triggering time $T_{I, \mathsf{t}(I)}$ is also derived.
Hereafter, we adopt polar coordinates to represent $\bm{x} \in \R^2$ and write $\bm{x} = (x, \phi)$, where $x \ge 0$ and $|\phi| \le \pi$.

%\textcolor{white}{.}\\（第2段落）
\subsection{Initial Serving Cell and Non-triggering Event}

We discuss the distribution of $X_I \in \Phi$ and the probability of $\mathcal{N}_0$.
By the definition $I$ given in \eqref{eq:initial_serving_cell_t0}, together with \eqref{eq:RSRP_def} and $\bm{u}_0 = \bm{o}$, the joint probability density function (PDF) of $X_I$, denoted by $f_{X_I}(r, \theta)$ for $r \ge 0$ and $|\theta| \le \pi$, is well known as \cite{chiu2013stochastic}
\begin{equation}\label{eq:joint_pdf_XI}
  f_{X_I}(r, \theta) = \lambda r \ee^{-\pi\lambda r^2}.
\end{equation}
Given $X_I = (r, \theta)$, we regard $r > 0$ and $|\theta| < \pi$.
By the definition of $\mathcal{N}_0$ in \eqref{eq:initial_non-triggering_event}, $\mathcal{N}_0$ occurs if and only if no points of $\Phi$ lie on the region $b_{\bm{o}}(\hat{a} \|X_I\|) \backslash b_{\bm{o}}(\|X_I\|)$. 
Therefore, 
\begin{equation}\label{eq:cond_prob_N0}
  \Prb\bigl(\mathcal{N}_0 | X_I = (r, \theta)\bigr)
  =
  \ee^{-\pi\lambda r^2(\hat{a}^2 - 1)}.
\end{equation}
%

%\subsubsection{The first triggered time distribution}
\subsection{Triggering Time Distribution}

Recall that $T_{I,\mathsf{t}(I)}$
%, defined in \eqref{eq:TTT_expire_ij} with \eqref{eq:ho_time_cell_fin}, 
denotes the triggering time at which the TTT timer expires after $\Delta_{\mathsf{T}}$, for the initial serving cell $I$ and its target cell $\mathsf{t}(I)$ given in \eqref{eq:ho_time_cell_fin}.
%The following holds for $T_{I,\mathsf{t}(I)}$.

\begin{lemma}\label{lem:T_tildeT_equiv}
Under the initial non-triggering event $\mathcal{N}_0$, 
%it holds that
%
\begin{equation}\label{eq:T_tildeT_equiv}
  T_{I,\mathsf{t}(I)} = \tilde{T}_{I, \mathsf{f}(I)} > 0,
\end{equation}
where $\tilde{T}_{I, \mathsf{f}(I)}$ is the first triggering time, with $\mathsf{f}(I)$ given in \eqref{eq:ho_time_cell_ini}.
\end{lemma}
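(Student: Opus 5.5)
The plan is to exploit the geometry of the A3 condition for straight-line motion. For fixed $I$ and $j$, write $\bm{u}_s=(sv,0)$ and set
\begin{equation*}
g_j(s)=\hat{a}^2\|X_I-\bm{u}_s\|^2-\|X_j-\bm{u}_s\|^2 .
\end{equation*}
The $s^2$ terms have coefficients $\hat{a}^2v^2$ and $v^2$, so $g_j$ is a quadratic in $s$ with leading coefficient $(\hat{a}^2-1)v^2>0$, because $\hat{a}>1$ by \eqref{eq:a_hat}. The set $\{s:g_j(s)\ge 0\}$ is therefore the complement of an open interval, possibly empty. Geometrically, this is the statement that the region where the A3 condition fails is the open Apollonius disk around $X_j$.

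\textbf{Step 1 (strict positivity).} Under $\mathcal{N}_0$ we have $g_j(0)<0$ for every $j\ne I$. Hence each $j$ whose triggering set is nonempty satisfies the following. There is a root $\tilde{T}_{I,j}>0$ such that $g_j<0$ on $[0,\tilde{T}_{I,j})$ and $g_j\ge 0$ on $[\tilde{T}_{I,j},\infty)$: since $0$ lies in the bounded interval where $g_j<0$, the condition fails up to the right root and holds forever after it. This immediately gives $\tilde{T}_{I,\mathsf{f}(I)}>0$, and it also rules out any triggering at time $0$. The infimum in \eqref{eq:TTT_trigger_Sj} is attained because $g_j$ is continuous.

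\textbf{Step 2 (TTT equals first trigger, cell by cell).} Because the condition holds on all of $[\tilde{T}_{I,j},\infty)$, the interval $[\tilde{T}_{I,j},\tilde{T}_{I,j}+\Delta_{\mathsf{T}})$ satisfies the requirement in \eqref{eq:TTT_expire_ij}, so $T_{I,j}\le\tilde{T}_{I,j}$. Conversely, any admissible $\tau$ needs $g_j(\tau)\ge0$, which forces $\tau\ge\tilde{T}_{I,j}$ on $[0,\infty)$. Hence $T_{I,j}=\tilde{T}_{I,j}$ for every $j\ne I$, and both are infinite exactly when $g_j$ never becomes nonnegative.

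\textbf{Step 3 (minimizers).} Since $T_{I,j}=\tilde{T}_{I,j}$ for all $j$, the minimization problems \eqref{eq:ho_time_cell_ini} and \eqref{eq:ho_time_cell_fin} coincide. Their minimum values are therefore equal, $T_{I,\mathsf{t}(I)}=\tilde{T}_{I,\mathsf{f}(I)}$, whichever minimizer is chosen.

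The main obstacle is making this minimum well-defined and finite almost surely. For the minimum to exist, we need $\tilde{T}_{I,j}$ to be attained for some $j$: only finitely many points of the PPP lie within any bounded distance, and a cell with very distant $X_j$ cannot trigger before some nearby cell does. For finiteness, we need at least one $j$ to eventually trigger; this holds because the UE travels to infinity and is eventually closer to other BSs than to $X_I$ by the factor $\hat{a}$. I would handle both points with a short almost-sure argument based on local finiteness of $\Phi$. Ties occur with probability zero.
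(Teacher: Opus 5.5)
Your proof is correct and is essentially the paper's argument: the paper observes that the non-triggering region $D_{I,j}$ is an open disk containing $\bm{u}_0=\bm{o}$ under $\mathcal{N}_0$, so the straight-line trajectory never re-enters it once it leaves, which gives $T_{I,j}=\tilde{T}_{I,j}>0$ for every $j$ and hence equal minima, and your convex quadratic $g_j(s)$ is that same convexity restricted to the line. Two small corrections: the disk contains $X_I$, not $X_j$; and since $g_j(s)\to+\infty$, every $\tilde{T}_{I,j}$ is finite deterministically, so local finiteness of $\Phi$ is only needed to show that the minimum over $j$ is attained, which the paper takes for granted.
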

\begin{proof}

For $j \in \N\backslash\{I\}$, define $D_{I, j}$ as
\begin{equation*}%\label{eq:non_trigger_domain}
    D_{I, j} = \{\bm{z} \in \R^2: \hat{a} \|X_I - \bm{z}\| < \|X_j - \bm{z}\| \},
\end{equation*}
where $\hat{a} > 1$ is given in \eqref{eq:a_hat}. 
$D_{I, j}$ represents the region in $\R^2$ where the A3 condition in \eqref{eq:trigger_cond_equiv} is not satisfied for $I$ and $j$.
Under the event in \eqref{eq:initial_non-triggering_event}, $\bm{u}_0 = \bm{o} \in D_{I, j}$, which is an open disk since $\hat{a} > 1$.
Thus, $\tilde{T}_{I, j} > 0$ follows from \eqref{eq:TTT_trigger_Sj}.
%
%
%
%Since $\bm{u}_t = (tv, 0)$, i.e., the UE moves on a straight line, $\bm{u}_t \notin D_{I, j}$ for any $t \ge t'$ when $t' \notin D_{I, j}$,
%
%which implies that the A3 condition is retained ever since it is once satisfied.
%
%
Since $\bm{u}_t = (tv, 0)$, if $\bm{u}_{t'} \notin D_{I, j}$ for some $t'$, then $\bm{u}_t \notin D_{I, j}$ for all $t \ge t'$,
which implies that the A3 condition, once satisfied, remains satisfied thereafter.
Therefore, from \eqref{eq:TTT_trigger_Sj} and \eqref{eq:TTT_expire_ij}, we have
$\tilde{T}_{I, j} = T_{I, j}$.
The claim follows from \eqref{eq:ho_time_cell_ini} and \eqref{eq:ho_time_cell_fin}.
\end{proof}

%The result of the distribution of $T_{I, \mathsf{t}(I)}$ is as follows.
%The distribution of $T_{I, \mathsf{t}(I)}$ is characterized as follows.

\begin{proposition}\label{prop:pdf_TS0}
Under the initial non-triggering event $\mathcal{N}_0$ and given $X_I = (r, \theta)$, the conditional PDF of $T_{I, \mathsf{t}(I)}$ for $t > 0$ is
\begin{align}\label{eq:pdf_TS0}
  &f_{T_{I, \mathsf{t}(I)} | \mathcal{N}_0}(t \,|\, r, \theta)
  \\ 
  &= 
  \begin{cases}
    \begin{aligned}
      & \lambda \frac{\dd}{\dd t} \xi(t, r, \theta) \exp\bigl( -\lambda \xi(t, r, \theta)\bigr),
      \\[6pt]
      &\quad \text{when } \ \textstyle 
%        \frac{\hat{K}_0 r}{v} (\hat{a}\cos\theta - 1)^+ \le t \le\frac{\hat{K}_0 r}{v} (\hat{a}\cos\theta + 1)^+, 
         |\hat{a}r - tv| \le \hat{r}(t) \le \hat{a}r + tv,
      \\[6pt]
    \end{aligned}
    \\[6pt]
    \begin{aligned}
      & 2\pi\lambda\hat{a}^2 v (tv - r\cos\theta) \exp\bigl( -\pi\lambda\hat{a}^2tv(tv - 2r\cos\theta) \bigr),
      \\[6pt]
      &\quad \text{when } \ \textstyle 
%        \frac{\hat{K}_0 r}{v} (\hat{a}\cos\theta + 1)^+ \le t,
         \hat{a}r + tv \le \hat{r}(t),
      \\[6pt]
    \end{aligned}
    \\[6pt]
    0,
      \quad \text{otherwise}, 
  \end{cases}
\nonumber
\end{align}
where 
%$w_{r, tv, \theta} = \sqrt{r^2 + t^2 v^2 - 2rtv\cos\theta}$, and
%
\begin{align*}
  \textstyle
  &\xi(t, r, \theta)
  =
  \hat{a}^2 w_{r, tv, \theta}^2
  \cos^{-1}
  \left(
    \frac{\hat{a}r\cos\theta}{w_{r \!, tv \!, \theta}}
    -
%    \frac
%      {\hat{a}^2 + 1}
%      {2\hat{a}}
    \frac{tv}{w_{r \!, tv \!, \theta}}
    A^{\!+}
  \right)
  \\
  &-
  \hat{a}^2 r^2
  \cos^{-1} \!
  \left(
    \hat{a}\cos\theta
    -
%    \frac
%      {\hat{a}^2 - 1}
%      {2\hat{a}}
    \frac{tv}{r} A^{\!-}
  \right)
  +
  \hat{a}rtv
  \sqrt{
    1
    -
    \left(
      \hat{a}\cos\theta
      -
%      \frac
%        {\hat{a}^2 - 1}
%        {2\hat{a}}
      \frac{tv}{r} A^{\!-}
    \right)^2
  },
\end{align*}
with $A^{\!+} = (\hat{a}^2 + 1)/(2\hat{a})$, and $A^{\!-} = (\hat{a}^2 - 1)/(2\hat{a})$, and
\begin{equation}\label{eq:r_hat_t}
    \hat{r}(t) = \hat{a} w_{r, tv, \theta} = \hat{a} \sqrt{r^2 + t^2 v^2 - 2rtv\cos\theta}.
\end{equation}
\end{proposition}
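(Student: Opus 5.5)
By Lemma~\ref{lem:T_tildeT_equiv}, under $\mathcal{N}_0$ we have $T_{I,\mathsf{t}(I)}=\tilde{T}_{I,\mathsf{f}(I)}=\min_{j\neq I}\tilde{T}_{I,j}$. It therefore suffices to compute the conditional survival function $\Prb(\tilde{T}_{I,\mathsf{f}(I)}>t\mid\mathcal{N}_0,X_I=(r,\theta))$ and differentiate it in $t$. The event $\{\tilde{T}_{I,\mathsf{f}(I)}>t\}$ means that no BS $X_j$, $j\neq I$, satisfies $\|X_j-\bm{u}_s\|\le\hat{a}\|X_I-\bm{u}_s\|$ for any $s\in[0,t]$. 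Equivalently, $\Phi\setminus\{X_I\}$ has no point in
\begin{equation*}
  U_t=\bigcup_{s\in[0,t]} b_{\bm{u}_s}\bigl(\hat{a}\|X_I-\bm{u}_s\|\bigr),
\end{equation*}
where the boundary is immaterial because it has measure zero. Given $X_I=(r,\theta)$, the remaining points form a PPP of intensity $\lambda$ outside $b_{\bm{o}}(r)$ (Slivnyak plus the nearest-BS conditioning). Conditioning further on $\mathcal{N}_0$ removes the annulus $b_{\bm{o}}(\hat{a}r)\setminus b_{\bm{o}}(r)$, so the points form a PPP outside $b_{\bm{o}}(\hat{a}r)$. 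Hence
\begin{equation*}
  \Prb(T_{I,\mathsf{t}(I)}>t\mid\mathcal{N}_0,X_I=(r,\theta))=\exp\bigl(-\lambda\,|U_t\setminus b_{\bm{o}}(\hat{a}r)|\bigr),
\end{equation*}
and the PDF is $\lambda\frac{\dd}{\dd t}|U_t\setminus b_{\bm{o}}(\hat{a}r)|\,\exp(-\lambda|U_t\setminus b_{\bm{o}}(\hat{a}r)|)$.

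The key geometric step is to simplify $U_t$. For fixed $\bm{z}$, consider
\begin{equation*}
  g(s)=\|\bm{z}-\bm{u}_s\|^2-\hat{a}^2\|X_I-\bm{u}_s\|^2 .
\end{equation*}
Because $\bm{u}_s=(sv,0)$ is affine in $s$, $g$ is a quadratic in $s$ with leading coefficient $(1-\hat{a}^2)v^2<0$, so $g$ is concave. Its minimum over $[0,t]$ is therefore attained at an endpoint. Thus $\bm{z}\in U_t$ if and only if $g(0)\le0$ or $g(t)\le0$, which gives
\begin{equation*}
  U_t=b_{\bm{o}}(\hat{a}r)\cup b_{\bm{u}_t}(\hat{r}(t)),
\end{equation*}
with $\hat{r}(t)=\hat{a}\|X_I-\bm{u}_t\|=\hat{a}w_{r,tv,\theta}$ as in \eqref{eq:r_hat_t}. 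Consequently the relevant area is $|b_{\bm{u}_t}(\hat{r}(t))\setminus b_{\bm{o}}(\hat{a}r)|$: the area of a disk of radius $\hat{r}(t)$ minus its overlap with a disk of radius $\hat{a}r$, the two centers being at distance $tv$.

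I then split into the three standard two-circle configurations.
\begin{itemize}
\item If $\hat{r}(t)\le\hat{a}r-tv$, the new disk lies inside $b_{\bm{o}}(\hat{a}r)$. The area is zero, so the density vanishes; this is the ``otherwise'' case.
\item If $\hat{r}(t)\ge\hat{a}r+tv$, the new disk contains $b_{\bm{o}}(\hat{a}r)$. The area is $\pi\hat{r}(t)^2-\pi\hat{a}^2r^2=\pi\hat{a}^2tv(tv-2r\cos\theta)$. Differentiating gives $2\pi\hat{a}^2v(tv-r\cos\theta)$, which yields the second line of \eqref{eq:pdf_TS0}.
\item If $|\hat{a}r-tv|\le\hat{r}(t)\le\hat{a}r+tv$, the circles intersect. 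The area is $\hat{r}^2\alpha-\hat{a}^2r^2\gamma+\tfrac12\cdot(\text{twice the kite area})$, where $\alpha$ and $\gamma$ are the half-angles subtended at the two centers, obtained from the law of cosines. I will identify this with $\xi(t,r,\theta)$.
\end{itemize}

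The main obstacle is the algebra in the intersecting case. One must show that the law-of-cosines angles reduce to the stated arguments. At $\bm{o}$, the cosine is
\begin{equation*}
  \frac{\hat{a}^2r^2+t^2v^2-\hat{a}^2w^2}{2\hat{a}r\,tv}.
\end{equation*}
Substituting $w^2=r^2+t^2v^2-2rtv\cos\theta$, this should simplify to $\hat{a}\cos\theta-\frac{tv}{r}A^-$, with a sign flip in the arccos, since the area outside the origin disk uses $\pi$ minus that angle. The angle at $\bm{u}_t$ should reduce analogously to $\frac{\hat{a}r\cos\theta}{w}-\frac{tv}{w}A^+$. I will also check that the triangle-area term equals $\hat{a}r\,tv\sin(\cdot)$, using the same angle. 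I would verify these identities directly by expanding $w^2$, and confirm continuity at the case boundaries as a sanity check.
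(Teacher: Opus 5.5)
Your proposal is correct and follows essentially the same route as the paper: Lemma~\ref{lem:T_tildeT_equiv}, the endpoint reduction of the union of disks (your concave $g$ is the negative of the convex $g$ in Lemma~\ref{lem:ball_inclusion}), the void probability of the PPP outside $b_{\bm{o}}(\hat{a}r)$ given $X_I$ and $\mathcal{N}_0$, and the two-circle area formula differentiated in $t$. There are two bookkeeping fixes. First, you must also rule out the fourth (disjoint) configuration $\hat{r}(t) < tv - \hat{a}r$, since the ``otherwise'' case is zero only if it cannot occur; it never does, because $\hat{a}(w_{r,tv,\theta} + r) \ge \hat{a}tv > tv$ by the triangle inequality. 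Second, the supplement ($\pi$ minus the angle) belongs to the half-angle at $\bm{u}_t$, whose cosine is $(tvA^{+} - \hat{a}r\cos\theta)/w_{r,tv,\theta}$, not to the angle at $\bm{o}$, whose cosine is exactly $\hat{a}\cos\theta - \frac{tv}{r}A^{-}$ with no flip.
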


\begin{remark}%\label{rmk:cc_intersect_comp}
Although $\xi(t, r, \theta)$ admits a closed-form expression,
its derivative with respect to $t$ becomes algebraically involved.
For numerical evaluation of the PDF, we therefore compute 
$\frac{\mathrm d}{\mathrm dt}\xi(t, r, \theta)$ numerically.
The expression in \eqref{eq:pdf_TS0} does not apply to the case $\mathsf{Off} > 0$, in which Lemma~\ref{lem:ball_inclusion} given below does not hold.
\end{remark}

\begin{lemma}\label{lem:ball_inclusion}
Let $\ell>0$, $a \ge 1$, and $\bm{b} \in \R^2$. 
Define the disk region $d(u)$ for $\bm{u}=(u,0)$, $0\le u\le\ell$, as
\[
d(u)=b_{\bm{u}}\bigl( a\|\bm{b}-\bm{u}\| \bigr),
\]
where $b_{\bm{x}}(r)$ denotes an open ball in $\R^2$ with center $\bm{x}$ and radius $r>0$. Then
\begin{equation}\label{eq:ball_inclusion}
  \bigcup_{0\le u\le\ell} d(u) = d(0)\cup d(\ell).
\end{equation}
\end{lemma}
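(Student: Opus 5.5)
We reduce the inclusion to a pointwise convexity statement. The inclusion $\supseteq$ in \eqref{eq:ball_inclusion} is trivial, since $d(0)$ and $d(\ell)$ are members of the union. For $\subseteq$, fix $\bm{z}\in\R^2$ and set
\[
  g(u) = a^2\|\bm{b}-\bm{u}\|^2 - \|\bm{z}-\bm{u}\|^2, \qquad 0\le u\le\ell .
\]
By definition, $\bm{z}\in d(u)$ if and only if $\|\bm{z}-\bm{u}\| < a\|\bm{b}-\bm{u}\|$, that is, if and only if $g(u)>0$. Squaring is legitimate here because both sides are nonnegative.

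It therefore suffices to show that whenever $g(u_0)>0$ for some $u_0\in[0,\ell]$, we also have $g(0)>0$ or $g(\ell)>0$. This follows immediately if $g$ is convex in $u$, because a convex function on an interval attains its maximum at an endpoint.

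To check convexity, write $\bm{b}=(b_1,b_2)$ and $\bm{z}=(z_1,z_2)$ and expand $g$:
\[
  g(u) = a^2\bigl((b_1-u)^2+b_2^2\bigr) - \bigl((z_1-u)^2+z_2^2\bigr).
\]
This is a quadratic polynomial in $u$ with leading coefficient $a^2-1\ge 0$, since $a\ge 1$. Hence $g$ is convex on $[0,\ell]$; in the boundary case $a=1$ it is affine, which is still convex. Consequently
\[
  \max_{0\le u\le\ell} g(u) = \max\{g(0),g(\ell)\}.
\]
So if $g(u_0)>0$, then $\max\{g(0),g(\ell)\}>0$, which means $\bm{z}\in d(0)\cup d(\ell)$. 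This proves the inclusion $\subseteq$ and hence \eqref{eq:ball_inclusion}.

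The only step that needs care is recognizing that membership in $d(u)$ is a sign condition on a quadratic in $u$ whose $u^2$-terms almost cancel, leaving coefficient $a^2-1$. This is exactly where the hypothesis $a\ge1$ enters, and it explains why the companion remark says the result fails when $\mathsf{Off}>0$: in that case $\hat a<1$, the quadratic is concave, and an interior $u$ can cover points that neither endpoint disk covers.
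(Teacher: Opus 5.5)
Your proposal is correct and follows the paper's proof essentially verbatim. Both arguments encode membership $\bm{z}\in d(u)$ as $g(u)>0$ with $g(u)=a^2\|\bm{b}-\bm{u}\|^2-\|\bm{z}-\bm{u}\|^2$, use $a\ge 1$ to conclude that $g$ is affine or convex in $u$, and conclude that its maximum over $[0,\ell]$ is attained at an endpoint.
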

\begin{proof}
For fixed $\bm{x} \in \R^2$, define the function 
\(
g(u)=a^2\|\bm{b}-\bm{u}\|^2-\|\bm{x}-\bm{u}\|^2.
\)
Note that $\bm{x} \in d(u)$ if and only if $g(u) > 0$.
Since $a \ge 1$, $g$ is a linear function or a convex quadratic function of $u$.
Hence, $g(u)$ attains its maximum over the compact interval $[0,\ell]$
when $u\in\{0,\ell\}$.
%Hence, by the standard fact that a convex function on a compact 
%interval attains its maximum at an extreme point of the interval
%(e.g., Bauer's maximum principle~\cite{ConvexOpt}),
%$\max_{u\in[0,\ell]} g(u)$ is attained at $u\in\{0,\ell\}$.
This implies that $\bm{x}\in d(u)$ for some $u\in[0,\ell]$
if and only if $\bm{x}\in d(0)\cup d(\ell)$.
\end{proof}

\begin{proof}[Proof of Proposition~\ref{prop:pdf_TS0}]
By \eqref{eq:TTT_trigger_Sj} and \eqref{eq:ho_time_cell_ini}, 
$\tilde{T}_{I, \mathsf{f}(I)} \le t$ holds if and only if
\begin{equation}\label{eq:non-trigger_cond_equiv}
  \exists\, (t', j) \in [0,t] \times (\N\setminus\{I\})
  \ \text{ s.t. }\
  \hat{a} \|X_I - \bm{u}_{t'}\| \ge \|X_j - \bm{u}_{t'}\|.
\end{equation}
Let $|B|$ denote the area of $B \subset \R^2$ and define the conditional non-empty probability of the homogeneous PPP $\Phi$ on $B$ as
\begin{equation}\label{eq:comp_void_prob_PPP}
  \bar{\mathcal{V}}(B \mid r, \theta) := \Prb\bigl(\Phi(B) \neq 0 \mid \mathcal{N}_0, X_I = (r, \theta)\bigr).
\end{equation}
Then, we have
\begin{align}\label{eq:ccdf_TS0}
  &\Prb\bigl(T_{I, \mathsf{t}(I)} \le t \mid \mathcal{N}_0, X_I = (r, \theta)\bigr)
  \nonumber\\
  &=
  \bar{\mathcal{V}}\Bigl( 
    \bigcup_{t' \in [0, t]} b_{\bm{u}_{t'}} \bigl(\hat{a}\|X_I - \bm{u}_{t'}\| \bigr) \,\Big|\,\, r, \theta
  \Bigr)
  \nonumber\\
  &=
  \bar{\mathcal{V}}\Bigl( 
     b_{\bm{o}} \bigl(\hat{a}\|X_I\| \bigr) 
     \cup 
     b_{\bm{u}_{t}} \bigl(\hat{a}\|X_I - \bm{u}_{t}\| \bigr)
     \,\Big|\,\, r, \theta
  \Bigr)
  \nonumber\\
  &=
  1 - \exp\Bigl( -\lambda \,\big| b_{(tv, 0)}(\hat{a} w_{r, tv, \theta}) \backslash b_{\bm{o}}(\hat{a}r) \big|\, \Bigr),
\end{align}
where 
the first equality follows from \eqref{eq:T_tildeT_equiv} and \eqref{eq:non-trigger_cond_equiv},
\eqref{eq:ball_inclusion} is applied in the second equality, and
the last equality follows from the void probability of $\Phi$ conditioned on $X_I = (r, \theta)$ and $\mathcal{N}_0$ in \eqref{eq:initial_non-triggering_event}.
Here, $w_{r, tv, \theta} = \sqrt{r^2 + t^2 v^2 - 2rtv\cos\theta}$.
By the well-known expression for the area of intersection between two circles,
%
%
%\begin{fleqn}[10pt]
\begin{align}\label{eq:cc_intersect_comp}
  &|b_{(\ell, 0)}(q) \backslash b_{\bm{o}}(p)|
  \\
  &=
  \begin{cases}
    \begin{aligned}
      & q^2\cos^{-1}\!\left(\frac{p^2 \!-\! q^2 \!-\! \ell^2}{2q\ell}\right)
        - p^2\cos^{-1}\!\left(\frac{p^2 \!-\! q^2 \!+\! \ell^2}{2p\ell}\right) \\
      &\,
        + p\ell\sqrt{1 - \left(\frac{p^2 \!-\! q^2 \!+\! \ell^2}{2p\ell}\right)^{\! 2}},
        \quad \!\! \text{when } |p-\ell| \le q \le p + \ell,      
    \end{aligned}
    \\[6pt]
    \pi (q^2 - p^2),
      \quad \! \text{when } p+\ell \le q,
    \\
    0,
      \quad \! \text{when } p-\ell \ge q, \\
    \pi q^2,
      \quad \! \text{when } \ell-p \ge q,
  \end{cases}
  \nonumber
\end{align}
%\end{fleqn}
for $p$, $q$, $\ell>0$. Since $tv \, - \, \hat{a}r \ge \hat{a} w_{r, tv, \theta}$ never occurs for $\hat{a} > 1$, substituting \eqref{eq:cc_intersect_comp} into \eqref{eq:ccdf_TS0} and differentiating both sides of \eqref{eq:ccdf_TS0} with respect to $t$ lead to the result.
%yield the conditional PDF of $\tilde{T}_{S, \mathsf{f}(S)}$ given by the right-hand side of \eqref{eq:pdf_TS0}. 
%The result then follows from \eqref{eq:T_tildeT_equiv}.
\end{proof}

%\subsubsection{The first neighbor cell distribution}
\subsection{Target Cell Distribution}

We represent the BS location of the target cell $\mathsf{t}(I)$ relative to the location 
$\bm{u}_{T_{I, \mathsf{t}(I)}} = (T_{I, \mathsf{t}(I)} v, 0)$
as
\begin{equation}\label{eq:polar_tI_at_T}
  X_{\mathsf{t}(I)} - \bm{u}_{T_{I, \mathsf{t}(I)}} = (R_{\mathsf{t}(I)}, \Theta_{\mathsf{t}(I)}).
\end{equation}
The following result contributes to the characterization of
$X_{\mathsf{t}(I)}$.
%To characterize $X_{\mathsf{t}(I)}$, the following result is provided. 

%We define the location of the target cell $\mathsf{t}(I)$ relative to the UE at time $T_{I, \mathsf{t}(I)}$ by
%\[
%  X_{\mathsf{t}(I)} - \bm{u}_{T_{I, \mathsf{t}(I)}} = (R_{\mathsf{t}(I)}, \Theta_{\mathsf{t}(I)}).
%\]
%The following result characterizes this location.
%
%
%\begin{itemize}
%  \item
%  時刻$T_{I, \mathsf{t}(I)}$におけるUEを中心としたターゲットセル$\mathsf{t}(I)$の位置について、
%  \[ X_{\mathsf{t}(I)} - \bm{u}_{T_{I, \mathsf{t}(I)}} = (R_{\mathsf{t}(I)}, \Theta_{\mathsf{t}(I)}), \]
%  とおく。
%  $\mathsf{t}(I)$の位置を特徴づける次の結果を得る。
%\end{itemize}

\begin{proposition}\label{prop:pdf_TilTheta}
  Under the initial non-triggering event $\mathcal{N}_0$, given $X_I = (r, \theta)$ and $T_{I, \mathsf{t}(I)} = t > 0$,
%  such that $t$ lies in the support of the PDF given in \eqref{eq:pdf_TS0},
  it holds that
  \begin{equation}\label{eq:R_tS}
    R_{\mathsf{t}(I)} = \hat{r}(t),
  \end{equation}
  where $\hat{r}(t)$ is given in \eqref{eq:r_hat_t},
%  \begin{equation}\label{eq:r_hat_t}
%    \hat{r}(t) = \hat{a} w_{r, tv, \theta} = \hat{a} \sqrt{r^2 + t^2 v^2 - 2rtv\cos\theta}.
%  \end{equation}
  and the conditional PDF of $\Theta_{\mathsf{t}(I)}$ is approximately given by
  \begin{align}
    &f_{\Theta_{\mathsf{t}(I)} | \mathcal{N}_0}(\phi \,|\, r, \theta, t)
    \nonumber\\
    &\approx
    \frac
      {\cos\phi - \hat{a}\cos\varphi_{r, tv, \theta}}
      {2\sin\alpha_{r, tv, \theta} - 2\hat{a}\alpha_{r, tv, \theta}\cos\varphi_{r, tv, \theta}},
    \quad |\phi| < \alpha_{r, tv, \theta},
    \label{eq:dens_Theta_tS}
  \end{align}
  where
  \begin{align}
    \varphi_{r, tv, \theta} 
    &= \cos^{-1}
    \left(
      \frac{r\cos\theta - tv}{w_{r, tv, \theta}}
    \right),
    \label{eq:vphi_cosinv}
    \\
    \alpha_{r, tv, \theta} 
    &= \cos^{-1}(-1 \vee \hat{a}\cos\varphi_{r, tv, \theta}).
    \label{eq:alpha_cosinv}
  \end{align}
  with $x \vee y = \max(x, y)$.
  %, and $w_{r, tv, \theta} = \sqrt{r^2 + t^2 v^2 - 2rtv\cos\theta}$.
\end{proposition}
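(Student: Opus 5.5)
The plan is to read off both claims from the geometry of the growing ``triggering region'' used in the proof of Proposition~\ref{prop:pdf_TS0}. By Lemma~\ref{lem:T_tildeT_equiv}, $T_{I,\mathsf{t}(I)}=\tilde T_{I,\mathsf{f}(I)}$ and $X_{\mathsf{t}(I)}=X_{\mathsf{f}(I)}$. By Lemma~\ref{lem:ball_inclusion}, the set of BS locations that would have triggered A3 by time $t$ is
\[
U(t)=b_{\bm{o}}(\hat a r)\cup b_{\bm{u}_t}(\hat r(t)),
\]
and under $\mathcal{N}_0$ no point of $\Phi\setminus\{X_I\}$ lies in $U(0)=b_{\bm{o}}(\hat a r)$.

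\textbf{Radius.} The event $\{T_{I,\mathsf{t}(I)}=t\}$ means that the first point of $\Phi$ hit by the increasing family $U(s)$ is hit exactly at $s=t$. Since $s\mapsto \hat a\|X_I-\bm{u}_s\|-\|X_j-\bm{u}_s\|$ is continuous, the triggering BS satisfies the A3 condition with equality at time $t$. Hence $\|X_{\mathsf{t}(I)}-\bm{u}_t\|=\hat a w_{r,tv,\theta}=\hat r(t)$, which is \eqref{eq:R_tS}. With probability one only a single point lies on the moving boundary, so the argmin is a.s.\ unique.

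\textbf{Angle.} For the PPP, conditionally on the first hit occurring in $[t,t+\dd t)$, the hit point is uniformly distributed (with respect to Lebesgue measure) over the newly swept set $U(t+\dd t)\setminus U(t)$. Its angular density is therefore proportional to the outward normal speed of the boundary $\partial b_{\bm{u}_t}(\hat r(t))$ at the point $\bm{u}_t+\hat r(t)(\cos\phi,\sin\phi)$, times the arc element $\hat r(t)\,\dd\phi$, restricted to where this speed is positive. The normal speed is the radial growth $\frac{\dd}{\dd t}\hat r(t)$ plus the projection $v\cos\phi$ of the centre's velocity. Differentiating \eqref{eq:r_hat_t} gives
\[
\frac{\dd}{\dd t}\hat r(t)=\hat a v\,\frac{tv-r\cos\theta}{w_{r,tv,\theta}}=-\hat a v\cos\varphi_{r,tv,\theta},
\]
with $\varphi_{r,tv,\theta}$ as in \eqref{eq:vphi_cosinv}, i.e.\ $\varphi$ is the angle of $X_I$ seen from $\bm{u}_t$. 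Thus the speed equals $v(\cos\phi-\hat a\cos\varphi_{r,tv,\theta})$. It is positive exactly for $|\phi|<\alpha_{r,tv,\theta}$, where $\cos\alpha=\hat a\cos\varphi$. When $\hat a\cos\varphi<-1$, the whole circle expands and $\alpha=\pi$, which is the role of the $-1\vee$ in \eqref{eq:alpha_cosinv}. Normalising,
\[
\int_{-\alpha}^{\alpha}(\cos\phi-\hat a\cos\varphi)\,\dd\phi=2\sin\alpha-2\hat a\alpha\cos\varphi,
\]
which yields \eqref{eq:dens_Theta_tS}.

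\textbf{Source of the approximation, and the main obstacle.} The exact swept set is $b_{\bm{u}_{t+\dd t}}(\hat r(t+\dd t))\setminus U(t)$, so the part of the expanding arc that lies inside $b_{\bm{o}}(\hat a r)$ must be removed. Those points are already excluded by $\mathcal{N}_0$, and they correspond exactly to the circle-intersection terms in $\xi$. The exact density would be the expression above restricted to $\{\phi:|\phi|<\alpha,\ \bm{u}_t+\hat r(\cos\phi,\sin\phi)\notin b_{\bm{o}}(\hat a r)\}$ and renormalised by $\frac{1}{\lambda}\frac{\dd}{\dd t}\xi$, i.e.\ by $\frac{\dd}{\dd t}\xi/\hat r(t)$ with the same proportionality constant. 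The hard step is justifying, or at least quantifying, that this overlap of the forward-moving arc with $b_{\bm{o}}(\hat a r)$ is small. I would argue this by noting that the arc with positive normal speed lies on the side of the circle facing away from $X_I$ and forward along the trajectory, whereas $b_{\bm{o}}(\hat a r)$ lies behind. I would then state the resulting density as the approximation, and leave its accuracy to the numerical validation rather than bounding the overlap analytically.
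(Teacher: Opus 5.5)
Your proposal is correct and follows essentially the paper's proof: the target BS is asymptotically uniform on the infinitesimal swept region, with $b_{\bm{o}}(\hat a r)$ ignored, and its radial thickness $v\,\dd t\,(\cos\phi-\hat a\cos\varphi_{r,tv,\theta})$ is normalized over $|\phi|<\alpha_{r,tv,\theta}$; you get the thickness from the normal speed of the moving circle instead of the paper's law-of-cosines expansion, and your continuity argument (the A3 condition holds with equality at the first triggering time) gives $R_{\mathsf{t}(I)}=\hat r(t)$ exactly and more directly than the paper's sandwich-and-limit step. The one point you leave implicit is $\hat a\cos\varphi_{r,tv,\theta}<1$, which guarantees $\alpha_{r,tv,\theta}>0$ and a positive normalizing constant; the paper derives it from the support condition $|\hat a r-tv|\le\hat r(t)$ together with $A^{-}<\hat a-1$.
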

\begin{proof}
See Appendix~\ref{asec:pdf_TilTheta}.
\end{proof}

\begin{remark}
The introduced approximation allows the point $X_{\mathsf{t}(I)}$ to lie inside $b_{\bm{o}} \bigl(\hat{a}\|X_I\| \bigr)$ under $\mathcal{N}_0$, which never occurs if $\mathsf{Off} = 0$, \text{i.e.}, $\hat{a} = 1$.
Therefore, this approximation becomes exact as $\mathsf{Off} \to 0$.
The minimizer in \eqref{eq:ho_time_cell_fin} is almost surely unique under $\mathcal{N}_0$, since the probability of multiple points vanishes, as shown in Appendix~\ref{asec:pdf_TilTheta}.
\end{remark}

\begin{remark}%\label{rmk:baseline_model}
Previous studies such as \textnormal{\cite{wei2023equivalent}}
and \textnormal{\cite{wei2024time}}
model the conditional distribution of the target cell angle,
conditioned on the target cell distance,
as uniform over the feasible angular region.
Under the notation of Proposition~\ref{prop:pdf_TilTheta}, this baseline model is represented as
\begin{align}
  f^{\textnormal{base}}_{\Theta_{\mathsf{t}(I)} | \mathcal{N}_0}(\phi \,|\, r, \theta, t)
  =
  \frac
    {1}
    {2\alpha_{r, tv, \theta}},
  \quad |\phi| < \alpha_{r, tv, \theta},
  \label{eq:dens_Theta_tS_baseline}
\end{align}
which does not rely on the target cell definition in \eqref{eq:ho_time_cell_fin}.
A comparison of the HO performance obtained with the baseline and proposed models is presented in Sec.~\ref{sec:numerical_evaluation}.
%Comparison of the HO performance obtained with the baseline and proposed models is seen in Sec.~\ref{sec:numerical_evaluation}.
\end{remark}

\begin{figure}[!t]
    \centering
    \includegraphics[width=\linewidth]{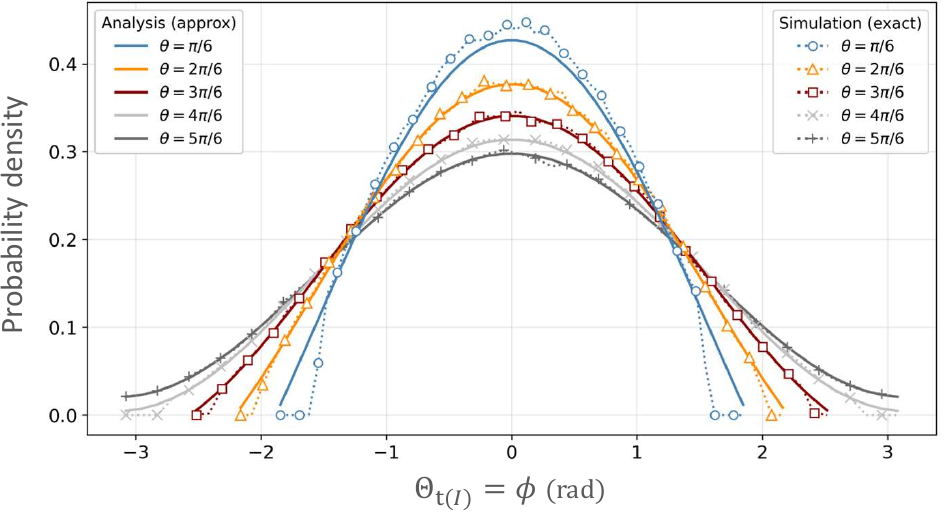}
%    \\ (i) $\hat{P}_{i, j} < 1$ \\
%    \vspace{3mm}
    \caption{
      Conditional PDF of the target cell angle $\Theta_{\mathsf{t}(I)}$ given $X_I = (r, \theta)$ and $T_{I, \mathsf{t}(I)} = t$. The analytical approximation in \eqref{eq:dens_Theta_tS} is compared with exact Monte Carlo simulations for different values of $\theta$.
    }
    \label{fig:1_prob_dens}
\end{figure}

Fig.~\ref{fig:1_prob_dens} compares the approximate  analytical results given by \eqref{eq:dens_Theta_tS} and the results from Monte Carlo simulations without the approximation for $r=0.5$ km, $t=50$ sec, $v=0.01$ km/sec, $\lambda=1$ km$^{-2}$, $\mathsf{Off}=-3$ dB, and $\beta=4$. Although a noticeable discrepancy is observed for small values of $\theta$, the analytical approximation captures the conditional distribution of $\Theta_{\mathsf{t}(I)}$ with good accuracy overall.

\section{Handover Event Probabilities}

In this section, we analyze the probabilities of the too-late HO and the ping-pong HO events described in Sec.~\ref{subsec:ho_event}.

\subsection{Approximation for Downlink Radio Link Quality}

Define the event $\mathcal{A}$ as
\begin{equation}\label{eq:set_conditions}
  \mathcal{A}
  =
  \left\{
    X_I = (r, \theta), \, T_{I, \mathsf{t}(I)} = t, \, X_{\mathsf{t}(I)} - \bm{u}_{t} = (\hat{r}(t), \phi)
  \right\}.
\end{equation}
where $r > 0$, $\theta \in (-\pi, \pi)$, $t > 0$, and $\phi \in (-\alpha_{r \!, tv \!, \theta}, \alpha_{r \!, tv \!, \theta})$.
By \eqref{eq:polar_tI_at_T} and \eqref{eq:R_tS}, the above definition is equivalent to
\begin{equation*}
\tag{\ref{eq:set_conditions}$'$}
\label{eq:set_conditions_prime}
  \mathcal{A}
  =
  \{
    X_I = (r, \theta), T_{I, \mathsf{t}(I)} = t, \Theta_{\mathsf{t}(I)} = \phi
  \}.
\end{equation*}
%
%For $s \ge 0$, the downlink radio link quality $\rho(s \mid X_I, X_{\mathsf{t}(I)})$ under $\mathcal{A}$ is
%denoted by $\rho(s \mid \mathcal{A})$.
For $s \ge 0$, under $\mathcal{A}$,
the downlink radio link quality
$\rho(s \mid X_I, X_{\mathsf{t}(I)})$ in \eqref{eq:downlink_RLQ}
is abbreviated as $\rho(s \mid \mathcal{A})$.

\begin{lemma}\label{lem:cond_expect_SINR}
The downlink radio link quality under $\mathcal{A}$ satisfies
\begin{align}\label{eq:cond_expect_SINR}
  &\rho(s \mid \mathcal{A})
  \nonumber\\
  &=
  \int_0^\infty \!\!
    \frac{ w_{r \!, sv \!, \theta}^{-\beta} }{ 1 + z\, w_{\hat{r}(t), (s-t)v, \phi}^{-\beta}}
%    \exp
%    \left(
    \ee^{
      -\sigma^2 z - \pi\lambda K_\beta \,z^{2/\beta}
    }
%    \right)
    \mu(z, s, t)
  \dd z,
\end{align}
where $w_{r \!, sv \!, \theta} = \sqrt{r^2 + s^2v^2 - 2rsv\cos\theta}$, 
%$\hat{r}(t)$ is given in \eqref{eq:r_hat_t}, 
and
\begin{align}
  K_\beta 
  &= 
  \frac{2\pi}{\beta}\csc\frac{2\pi}{\beta},
  \label{eq:K_beta}\\
  \mu(z, s, t) 
  &=
  \exp\left(
    2\lambda z
    \int_0^{tv + \hat{r}(t)} \!\!\!\!
      \int_0^{C(x, \hat{r}(t), tv)} \!
        \frac
          {x}
          {z + w_{x \!, sv \!, \psi}^{\,\beta}}
      \dd \psi
    \dd x
  \right),
  \label{eq:func_mu_st}
\end{align}
with
\begin{align*}
  &C(x, \hat{r}(t), tv)
  \\
  &=
  \begin{cases}
    \pi, & 0 \le x \le \hat{a}r, 
  \\ 
    \cos^{-1} \left(-1 \vee \frac{x^2 + t^2v^2 - \hat{r}(t)^2}{2 x tv} \right), & \hat{a}r \le x \le tv + \hat{r}(t).
  \end{cases}
\end{align*}
\end{lemma}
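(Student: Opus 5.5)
The plan is to write $\mathsf{SINR}_I(s)$ as in \eqref{eq:SINR} with numerator $H_{I,n(s)}w_{r,sv,\theta}^{-\beta}$. Here $w_{r,sv,\theta}=\|X_I-\bm{u}_s\|$ under $X_I=(r,\theta)$. I then separate the randomness into three independent parts: the serving fading $H_I$, the target fading $H_{\mathsf{t}(I)}$, and the remaining interferers together with their fading. Since the fading marks are i.i.d. and independent of $\Phi$, the event $\mathcal{A}$ only constrains the geometry. Also $H_I$ is independent of the denominator and has unit mean. So $\rho(s\mid\mathcal{A})=w_{r,sv,\theta}^{-\beta}\,\Exp[1/D\mid\mathcal{A}]$, where $D$ is the interference-plus-noise term. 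I will use the identity $1/D=\int_0^\infty \ee^{-zD}\,\dd z$ and apply Fubini. This reduces everything to the Laplace transform of $D$ at $z$.

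The Laplace transform factorizes into three pieces.

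\emph{Noise.} This gives $\ee^{-\sigma^2 z}$.

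\emph{Target.} The target BS lies at distance $\|X_{\mathsf{t}(I)}-\bm{u}_s\|$ from the UE. By \eqref{eq:R_tS} and \eqref{eq:polar_tI_at_T}, this distance is $w_{\hat{r}(t),(s-t)v,\phi}$, the distance between $\bm{u}_s$ and a point at polar position $(\hat r(t),\phi)$ relative to $\bm{u}_t$. The exponential fading yields $1/(1+z\,w_{\hat{r}(t),(s-t)v,\phi}^{-\beta})$.

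\emph{Other BSs.} I need the conditional law of $\Phi\setminus\{X_I,X_{\mathsf{t}(I)}\}$ given $\mathcal{A}$ and $\mathcal{N}_0$. By the proof of Proposition~\ref{prop:pdf_TS0} and Lemma~\ref{lem:ball_inclusion}, $\{\mathcal{N}_0,T_{I,\mathsf{t}(I)}=t\}$ means the following: the region $V=b_{\bm{o}}(\hat a r)\cup b_{\bm{u}_t}(\hat r(t))$ contains no other BS, and the target sits on the boundary of the second disk. By the independence property of the PPP, and because conditioning on a point at a given location does not alter the rest (Slivnyak--Mecke), the remaining BSs form a homogeneous PPP of intensity $\lambda$ on $\R^2\setminus V$. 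The probability-generating functional then gives
\begin{equation*}
\exp\Bigl(-\lambda\int_{\R^2\setminus V}\frac{z}{z+\|\bm{x}-\bm{u}_s\|^{\beta}}\,\dd\bm{x}\Bigr).
\end{equation*}
I split this as the integral over all of $\R^2$ minus the integral over $V$. The whole-plane integral is the standard one, $\pi K_\beta z^{2/\beta}$ with $K_\beta=\Gamma(1+2/\beta)\Gamma(1-2/\beta)=\frac{2\pi}{\beta}\csc\frac{2\pi}{\beta}$. It is translation invariant, hence independent of $s$. The integral over $V$ reappears with a plus sign, and this is exactly $\mu(z,s,t)$.

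It remains to express the $V$-integral in polar coordinates $(x,\psi)$ about the origin. The distance to $\bm{u}_s=(sv,0)$ is then $w_{x,sv,\psi}$. The integrand and $V$ are symmetric in $\psi$, which produces the factor $2$ and the range $\psi\in[0,C]$. For $x\le\hat a r$ the whole circle lies in $b_{\bm{o}}(\hat a r)$, so $C=\pi$. For $x>\hat a r$ only the part inside $b_{\bm{u}_t}(\hat r(t))$ counts. That part is characterized by $x^2+t^2v^2-2xtv\cos\psi<\hat r(t)^2$, which gives $\cos\psi>(x^2+t^2v^2-\hat r(t)^2)/(2xtv)$, truncated at $-1$. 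The radial range ends at $tv+\hat r(t)$, the farthest point of the second disk from $\bm{o}$.

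I expect the main obstacle to be the rigorous justification of this conditional point-process description. The conditioning event $\mathcal{A}$ has probability zero, and the target is selected through the minimizer in \eqref{eq:ho_time_cell_fin}. I would handle this via the Campbell--Mecke formula applied to the pair $(X_I,X_{\mathsf{t}(I)})$, together with the almost-sure uniqueness of the minimizer noted in the remark after Proposition~\ref{prop:pdf_TilTheta}. The point to check is that the disintegration leaves an unconditioned PPP outside $V$ with no extra constraint. The remaining steps are routine Laplace-functional computations.
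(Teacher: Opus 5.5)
Your proposal is correct and follows essentially the same route as the paper's proof. Both arguments reduce $\rho(s\mid\mathcal{A})$ to a $z$-integral of the Laplace transform of interference plus noise, factor out the target term $1/(1+z\,w_{\hat{r}(t),(s-t)v,\phi}^{-\beta})$, apply the PGFL on $\R^2\setminus\bigl(b_{\bm{o}}(\hat{a}r)\cup b_{\bm{u}_t}(\hat{r}(t))\bigr)$, split off the whole-plane term $\pi\lambda K_\beta z^{2/\beta}$, and write the rest in polar coordinates to obtain $\mu(z,s,t)$.

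The differences are minor. You get the $z$-integral from $1/D=\int_0^\infty \ee^{-zD}\,\dd z$ and $\Exp[H_{I,n(s)}]=1$, whereas the paper uses the Rayleigh CCDF with $z=z'w_{r,sv,\theta}^{\beta}$. You also explicitly flag the Slivnyak--Mecke justification, which the paper only asserts. As the paper does, you should invoke the support condition of \eqref{eq:pdf_TS0}, which gives $\hat{a}r\ge tv-\hat{r}(t)$; this ensures the $\cos^{-1}$ argument in $C$ never exceeds $1$ for $x\ge\hat{a}r$.
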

\begin{proof}
%See Appendix~\ref{asec:lem_cond_expect_SINR} of the technical paper~\cite{tokuyama2026technical}.
See Appendix~\ref{asec:lem_cond_expect_SINR}.
\end{proof}

To obtain a tractable analytical characterization of the RLF event,
we adopt the following approximation.
\begin{approximation}\label{approx:cond_expect_SINR}
The function $\rho(s \mid \mathcal{A})$ in
\eqref{eq:cond_expect_SINR} is treated as monotonically decreasing
or unimodal with respect to $s \ge 0$.
\end{approximation}
The validity and impact of this approximation are numerically
investigated in Appendix~\ref{asec:assump_cond_expect_SINR}.

\subsection{Results for HO Event Probabilities}

This subsection presents analytical results for the probabilities of the events
$\mathcal{L}$ and $\mathcal{P}$ defined in \eqref{eq:ev_too_late} and
\eqref{eq:ev_ping_pong}. We first derive the expression for
$\Prb(\mathcal{P})$, and then obtain the result for $\Prb(\mathcal{L})$ as its corollary.

In this paper, the Heaviside step function $H(x)$ is defined as 
%
%$H(x)=1$ for $x\ge0$ and $H(x)=0$ for $x<0$.
%
%
\[
H(x)=
\biggl\{
\begin{array}{ll}
1, & x \ge 0,\\
0, & x < 0.
\end{array}
\biggl.
\]
%In this paper, the Heaviside step function is defined as
%\[
%H(x)=
%\begin{cases}
%1, & x\ge0,\\
%0, & x < 0.
%\end{cases}
%\]

%In this paper, the Heaviside step function and its shifted version are defined as
%\[
%H(x)=
%\begin{cases}
%1, & x \ge 0,\\
%0, & x < 0,
%\end{cases}
%\qquad
%H_a(x) := H(x-a),
%\]
%where $a \in \mathbb{R}$ denotes the shift parameter.

\begin{theorem}\label{thm:ping-pong}
The probability of the ping-pong HO event is
\begin{align*}%\label{eq:ping-pong_result}
&\Prb(\mathcal{P})
=
\lambda
\int_0^\infty \!\!\!
\int_0^\pi \!\!\!
\int_0^\infty \!\!\!
\int_{-\alpha_{r \!, tv \!, \theta}}^{\alpha_{r \!, tv \!, \theta}} 
r \ee^{-\pi\lambda\hat{a}^2r^2} 
\, \eta(\phi, t, r, \theta)
\nonumber\\
&\quad\times
\Bigl[
1 -
H\bigl(
  \min\{ 2\Delta_{\mathsf{T}} v, q_{\! \mathcal{A}}^+ \}
  -
  \max\{ \Delta_{\mathsf{T}} v, q_{\! \mathcal{A}}^- \}
\bigr)
H\bigl(d_{\! \mathcal{A}} - y_{\! \mathcal{A}}\bigr)
\Bigr]
\nonumber\\
&\quad\times
%
%\nonumber\\
%&\quad\times
H\bigl(
%\max\{
%  \zeta_{\mathcal{A}}(t + \Delta_{\mathsf{T}} - \Delta_{\mathsf{F}}),\;
%  \zeta_{\mathcal{A}}(t + \Delta_{\mathsf{T}})
%\}
J_{\mathcal{A}}(t)
-
Q_{\mathsf{F}}
\bigr)
H\bigl(
%\min\{
%  \zeta_{\mathcal{A}}(0),\;
%  \zeta_{\mathcal{A}}(t + \Delta_{\mathsf{T}} - \Delta_{\mathsf{F}})
%\}
%
\zeta_{\mathcal{A}}(0)
-
Q_{\mathsf{F}}
\bigr)
\, \dd \phi\, \dd t\, \dd \theta\, \dd r.
\end{align*}
where
%where $H(x)$ is the Heaviside step function, which we define $H(x) = 1$ when $x > 0$ and $H(x) = 0$ otherwise, and
%where $H(x)$ is the Heaviside step function defined by $H(x)=1$ for $x>0$ and $H(x)=0$ otherwise, and
%
\begin{align}
  \eta(\phi, t, r, \theta)
  &=
  2 f_{T_{I, \mathsf{t}(I)} | \mathcal{N}_0}(t \,|\, r, \theta)
  \,
  f_{\Theta_{\mathsf{t}(I)} \mid \mathcal{N}_0, X_I, T_{I,\mathsf{t}(I)}}(\phi \mid r,\theta,t)
  \nonumber\\
  &\approx
  \frac
    {\cos\phi - \hat{a}\cos\varphi_{r \!, tv \!, \theta}}
    {\sin\alpha_{r \!, tv \!, \theta} - \hat{a}\alpha_{r \!, tv \!, \theta}\cos\varphi_{r \!, tv \!, \theta}}
  f_{T_{I, \mathsf{t}(I)} | \mathcal{N}_0}(t \,|\, r, \theta),
  \label{eq:func_eta}\\
  J_{\mathcal{A}}(t) 
  %&= \max_{-\Delta_{\mathsf{F}} \le s \le 0} \zeta_{\mathcal{A}}(t + \Delta_{\mathsf{T}} + s)
  %\nonumber\\
  &\approx 
  %\max\{
  %  \zeta_{\mathcal{A}}(t + \Delta_{\mathsf{T}} - \Delta_{\mathsf{F}}),\;
  %  \zeta_{\mathcal{A}}(t + \Delta_{\mathsf{T}})
  %\}
  \zeta_{\mathcal{A}}(t + \Delta_{\mathsf{T}} - \Delta_{\mathsf{F}}),
  \label{eq:func_J_A}
  \\
  \zeta_{\mathcal{A}}(s)
  &=
  \begin{cases}
    \rho(s \mid \mathcal{A}), & s \ge 0, 
  \\ 
    Q_{\mathsf{F}}, & s < 0,
  \end{cases}
  \label{eq:func_zeta}
  %\\
  %q_{\! \mathcal{A}}^\pm
  %&=
  %x_{\! \mathcal{A}} \pm \sqrt{d_{\! \mathcal{A}}^2 - y_{\! \mathcal{A}}^2},
  %\nonumber
  %\label{eq:intersection_q_pm}
\end{align}
with $f_{T_{I, \mathsf{t}(I)} | \mathcal{N}_0}(t \,|\, r, \theta)$ and $\rho(s \mid \mathcal{A})$ given in \eqref{eq:pdf_TS0} and \eqref{eq:cond_expect_SINR}, respectively.
Furthermore, 
$
q_{\! \mathcal{A}}^\pm
=
x_{\! \mathcal{A}} \pm \sqrt{d_{\! \mathcal{A}}^2 - y_{\! \mathcal{A}}^2}
$, 
where 
%$x_{\! \mathcal{A}}$, $y_{\! \mathcal{A}}$, and $d_{\! \mathcal{A}}$ are given by
%
\begin{align}\label{eq:center_apollonius}
  &(x_{\! \mathcal{A}}, y_{\! \mathcal{A}})
  \nonumber\\
  &=
  \left(
    \frac
      {\hat{a}^3\cos\phi - \cos\varphi_{r \!, tv \!, \theta} }
      {\hat{a}^2 - 1}
    w_{r \!, tv \!, \theta},
    \frac
      {\hat{a}^3\sin\phi - \sin\varphi_{r \!, tv \!, \theta} }
      {\hat{a}^2 - 1}
    w_{r \!, tv \!, \theta}
  \right),
\end{align}
and
\begin{equation}\label{eq:radius_apollonius}
  d_{\! \mathcal{A}}
  =
  \frac
    {\hat{a}w_{r \!, tv \!, \theta}}
    {\hat{a}^2 - 1}
  \sqrt{\hat{a}^2 - 2\hat{a}\cos(\phi - \varphi_{r \!, tv \!, \theta}) + 1},
\end{equation}
with $w_{r \!, tv \!, \theta} = \sqrt{r^2 + t^2v^2 - 2rtv\cos\theta}$, and $\varphi_{r \!, tv \!, \theta}$ is in \eqref{eq:vphi_cosinv}. 
\end{theorem}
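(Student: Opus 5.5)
Conditioning on $\mathcal{A}$ in \eqref{eq:set_conditions_prime} and integrating, I would write
\[
\Prb(\mathcal{P})=\int \Prb(\mathcal{P}\mid\mathcal{A})\,\Prb(\mathcal{N}_0\mid X_I)\,f_{X_I}\,f_{T_{I,\mathsf{t}(I)}\mid\mathcal{N}_0}\,f_{\Theta_{\mathsf{t}(I)}\mid\mathcal{N}_0}\,\dd\phi\,\dd t\,\dd\theta\,\dd r .
\]
The prefactor is obtained by multiplying \eqref{eq:joint_pdf_XI} and \eqref{eq:cond_prob_N0}, which gives $\lambda r\ee^{-\pi\lambda\hat a^2 r^2}$. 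Symmetry in $\theta$ folds $\theta\in(-\pi,\pi)$ onto $(0,\pi)$ and produces the factor $2$ in $\eta$. Propositions~\ref{prop:pdf_TS0} and \ref{prop:pdf_TilTheta} then supply the remaining densities, so \eqref{eq:func_eta} follows with the approximation \eqref{eq:dens_Theta_tS}. Given $\mathcal{A}$, $X_I$ and $X_{\mathsf{t}(I)}$ are fixed and $\rho(\cdot\mid\mathcal{A})$ is deterministic by Lemma~\ref{lem:cond_expect_SINR}. Hence $\Prb(\mathcal{P}\mid\mathcal{A})$ is a product of indicators, and it suffices to show these indicators are the three Heaviside factors.

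$\mathcal{S}_0$ is exactly $H(\zeta_{\mathcal{A}}(0)-Q_{\mathsf{F}})$. For $\{T_{I,\mathsf{t}(I)}+\Delta_{\mathsf{T}}<F_I\}$: by \eqref{eq:RLF_expire_i}, $F_I\le t+\Delta_{\mathsf{T}}$ iff some window $[\tau-\Delta_{\mathsf{F}},\tau)\subset[0,t+\Delta_{\mathsf{T}})$ has $\rho<Q_{\mathsf{F}}$ throughout. I would then invoke Approximation~\ref{approx:cond_expect_SINR}. Once $\rho$ falls below $Q_{\mathsf{F}}$ after the peak, it stays below. If instead it is below before the peak, it started below, which contradicts $\mathcal{S}_0$ (up to the unimodal case where it dips from above; I would note this is the only place the approximation enters). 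Hence the sublevel set of $\rho$ within $[0,\infty)$ is, under $\mathcal{S}_0$, a terminal interval. The RLF condition therefore reduces to checking the left endpoint of the last window, $s=t+\Delta_{\mathsf{T}}-\Delta_{\mathsf{F}}$. If this is negative, no full window fits, which is why $\zeta_{\mathcal{A}}(s)=Q_{\mathsf{F}}$ is set for $s<0$ so that the indicator is $1$. This yields $H(J_{\mathcal{A}}(t)-Q_{\mathsf{F}})$ with \eqref{eq:func_J_A}. The boundary conventions (strict versus non-strict inequalities) only matter on null sets.

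For the event $T_{\mathsf{t}(I),I}=t+\Delta_{\mathsf{T}}$, we need $\hat a\|X_{\mathsf{t}(I)}-\bm u_s\|\ge\|X_I-\bm u_s\|$ for all $s\in[t+\Delta_{\mathsf{T}},t+2\Delta_{\mathsf{T}})$. The set where this reverse inequality fails is the open Apollonius disk $\{\bm z:\hat a\|X_{\mathsf{t}(I)}-\bm z\|<\|X_I-\bm z\|\}$. I would compute its center and radius in coordinates centred at $\bm u_t$. In these coordinates $X_{\mathsf{t}(I)}=\hat r(t)(\cos\phi,\sin\phi)=\hat a w(\cos\phi,\sin\phi)$, and $X_I-\bm u_t=w(\cos\varphi,\sin\varphi)$ by \eqref{eq:vphi_cosinv}. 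Completing the square in $\hat a^2\|\bm z-\bm p\|^2<\|\bm z-\bm q\|^2$ gives center $(\hat a^2\bm p-\bm q)/(\hat a^2-1)$ and radius $\hat a\|\bm p-\bm q\|/(\hat a^2-1)$. This reproduces \eqref{eq:center_apollonius} and \eqref{eq:radius_apollonius}. The UE segment is $\{(x,0):x\in[\Delta_{\mathsf{T}}v,2\Delta_{\mathsf{T}}v)\}$ relative to $\bm u_t$. The line $y=0$ meets the disk iff $d_{\mathcal{A}}>|y_{\mathcal{A}}|$, in the chord $(q^-_{\mathcal{A}},q^+_{\mathcal{A}})$. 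The PP condition fails iff the chord overlaps the segment, i.e. $\min\{2\Delta_{\mathsf{T}}v,q^+\}>\max\{\Delta_{\mathsf{T}}v,q^-\}$. This gives the bracket $1-H(\cdot)H(\cdot)$ up to null boundary sets; I would also check that the sign convention $d-y$ versus $d-|y|$ is harmless or handle it by symmetry.

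The main obstacle I expect is the RLF step. It requires justifying rigorously that, under $\mathcal{S}_0$ and the unimodality approximation, $\{F_I>t+\Delta_{\mathsf{T}}\}$ collapses to a single-point evaluation $J_{\mathcal{A}}(t)$. The difficulty is the unimodal case, where $\rho$ could dip below $Q_{\mathsf{F}}$ only in an early interval. That is precisely where the stated $J_{\mathcal{A}}$ is only ``$\approx$'' (the exact version being a maximum of $\zeta_{\mathcal{A}}$ over $[t+\Delta_{\mathsf{T}}-\Delta_{\mathsf{F}},t+\Delta_{\mathsf{T}}]$). I would present the exact max-form first and then reduce to the left endpoint.
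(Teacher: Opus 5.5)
Your proposal follows the paper's proof step by step. It uses the same factorization of the joint density of $(X_I,T_{I,\mathsf{t}(I)},\Theta_{\mathsf{t}(I)})$ on $\mathcal{N}_0$, the same fold of $\theta$ onto $(0,\pi)$ that produces the factor $2$ in $\eta$, the same Apollonius-disk criterion for $T_{\mathsf{t}(I),I}=t+\Delta_{\mathsf{T}}$, and the same use of Approximation~\ref{approx:cond_expect_SINR} to reduce the RLF condition to $\zeta_{\mathcal{A}}(t+\Delta_{\mathsf{T}}-\Delta_{\mathsf{F}})\ge Q_{\mathsf{F}}$. Your observation that everything is a deterministic indicator given $\mathcal{A}$ is what the paper calls conditional independence, and your completing-the-square step supplies the center and radius the paper omits. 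Two points in your plan need correcting.

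\textbf{The factor $H(d_{\mathcal{A}}-y_{\mathcal{A}})$.} Do not try to show that $H(d_{\mathcal{A}}-y_{\mathcal{A}})$ and $H(d_{\mathcal{A}}-|y_{\mathcal{A}}|)$ agree. They do not, and symmetry cannot help.
\begin{itemize}
\item Since $\varphi_{r,tv,\theta}\in[0,\pi]$, the representation $X_I-\bm{u}_t=w(\cos\varphi,\sin\varphi)$ used by you and the paper holds only for $\theta\in(0,\pi)$.
\item The reflection $(\theta,\phi)\mapsto(-\theta,-\phi)$ is therefore already spent on the $\theta$-fold, while $\phi$ still takes both signs in $(-\alpha,\alpha)$.
\item Take $\varphi=\pi/2$ (so $\alpha=\pi/2$) and let $\phi\to-\pi/2$. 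Then $y_{\mathcal{A}}\to-w(\hat a^3+1)/(\hat a^2-1)$ and $d_{\mathcal{A}}\to\hat a w/(\hat a-1)$, so $-y_{\mathcal{A}}-d_{\mathcal{A}}\to w(\hat a-1)>0$.
\item By continuity, there is an open set of the integration domain, carrying positive weight in the integrand, on which the disk lies strictly below the trajectory.
\end{itemize}
On that set the reverse A3 condition holds along the whole path, so the correct conditional probability is $1$. Your criterion $d_{\mathcal{A}}>|y_{\mathcal{A}}|$ gives exactly that. With $H(d_{\mathcal{A}}-y_{\mathcal{A}})=1$, by contrast, the quantities $q^{\pm}_{\mathcal{A}}$ are not even real.

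This is not an artifact of the approximation in Proposition~\ref{prop:pdf_TilTheta}. For example, $\hat a=1.1$, $\varphi=\pi/2+0.2$, $tv=w$, $\phi=-\pi/2$ is feasible under $\mathcal{N}_0$ and still gives $-y_{\mathcal{A}}>d_{\mathcal{A}}$. Your derivation therefore proves the formula with $|y_{\mathcal{A}}|$, which is the correct version. The paper's proof passes from the disk to the bracket without treating this case.

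\textbf{The RLF step.} The obstacle you anticipate here does not exist.
\begin{itemize}
\item Under Approximation~\ref{approx:cond_expect_SINR}, if $\rho(\cdot\mid\mathcal{A})$ is unimodal, then on its rising part $\rho(s\mid\mathcal{A})\ge\rho(0\mid\mathcal{A})\ge Q_{\mathsf{F}}$ under $\mathcal{S}_0$.
\item So in both the monotone and the unimodal case, $\{s\ge0:\rho(s\mid\mathcal{A})<Q_{\mathsf{F}}\}$ is a terminal interval.
\item A dip below $Q_{\mathsf{F}}$ followed by recovery is exactly what the approximation excludes. It is what Appendix~\ref{asec:assump_cond_expect_SINR} counts as a changed RLF outcome.
\end{itemize}
The maximum of $\zeta_{\mathcal{A}}$ over the last window is also not an exact intermediate form. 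Without the approximation, every window in $[0,t+\Delta_{\mathsf{T}})$ must be checked; with it, the max form and the left-endpoint test coincide. Go directly from the terminal-interval property to $H(J_{\mathcal{A}}(t)-Q_{\mathsf{F}})$, as the paper does.
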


\begin{proof}
See Appendix~\ref{asec:ping-pong}.
\end{proof}

\begin{corollary}\label{cor:too-late}
%\mathindent=0mm
%
The probability of the too-late HO event is
\begin{align*}%\label{eq:too-late_result}
&\Prb(\mathcal{L})
=
\lambda
\int_0^\infty \!\!\!
\int_0^\pi \!\!\!
\int_0^\infty \!\!\!
\int_{-\alpha_{r \!,tv \!, \theta}}^{\alpha_{r \!, tv \!, \theta}}
r \ee^{-\pi\lambda\hat{a}^2 r^2}
\, \eta(\phi, t, r, \theta)
\nonumber\\
&\quad\times
\bar{H}\bigl(
  J_{\mathcal{A}}(t) 
  - Q_{\mathsf{F}}
\bigr)
%\,
H\bigl(
  \zeta_{\mathcal{A}}(0) - Q_{\mathsf{F}}
\bigr)
\,\dd \phi\,\dd t\,\dd \theta \,\dd r.
\end{align*}
where $\bar{H}(x) = 1 - H(x)$, and $\eta(\phi, t, r, \theta)$ and $J_{\mathcal{A}}(t)$ are approximately given in \eqref{eq:func_eta} and \eqref{eq:func_J_A}, respectively.
\end{corollary}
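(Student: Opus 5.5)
We will obtain the corollary by complementation inside the integral used in the proof of Theorem~\ref{thm:ping-pong}, without new geometric analysis. The starting point is the same conditioning as in that proof. We condition on $\mathcal{A}$ in \eqref{eq:set_conditions_prime}, and use the joint density of $X_I$ from \eqref{eq:joint_pdf_XI} together with $\Prb(\mathcal{N}_0\mid X_I=(r,\theta))$ from \eqref{eq:cond_prob_N0}. Their product is $\lambda r\ee^{-\pi\lambda\hat{a}^2 r^2}$. Then:
\begin{itemize}
\item we multiply by $f_{T_{I,\mathsf{t}(I)}|\mathcal{N}_0}$ from Proposition~\ref{prop:pdf_TS0} and by the angular density of Proposition~\ref{prop:pdf_TilTheta};
\item we use the symmetry $\theta\mapsto-\theta$ (which maps $\phi\mapsto-\phi$) to restrict the $\theta$-integration to $[0,\pi]$, which produces the factor $2$ absorbed in $\eta$ of \eqref{eq:func_eta}.
\end{itemize}
This yields the common measure $\lambda r\ee^{-\pi\lambda\hat{a}^2r^2}\eta(\phi,t,r,\theta)\,\dd\phi\,\dd t\,\dd\theta\,\dd r$ on $\mathcal{N}_0$, so it remains to identify the conditional indicator of $\mathcal{L}$ given $\mathcal{A}$.

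Given $\mathcal{A}$, the quantities $\rho(\cdot\mid\mathcal{A})$ and $F_I$ are deterministic. The event $\mathcal{S}_0$ becomes $H(\zeta_{\mathcal{A}}(0)-Q_{\mathsf{F}})$, exactly as in the theorem. The core step is to show that
\[
\{T_{I,\mathsf{t}(I)}+\Delta_{\mathsf{T}}<F_I\}=\{J_{\mathcal{A}}(t)\ge Q_{\mathsf{F}}\}
\]
under Approximation~\ref{approx:cond_expect_SINR}. We would reuse the argument from the proof of Theorem~\ref{thm:ping-pong}, where the factor $H(J_{\mathcal{A}}(t)-Q_{\mathsf{F}})$ encodes precisely ``no RLF detected before HO completion''. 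By \eqref{eq:RLF_expire_i}, $F_I\le t+\Delta_{\mathsf{T}}$ holds if and only if some window $[\tau-\Delta_{\mathsf{F}},\tau)\subset[0,t+\Delta_{\mathsf{T}})$ has $\rho<Q_{\mathsf{F}}$ throughout.

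Since $\rho$ is decreasing or unimodal and $\rho(0)\ge Q_{\mathsf{F}}$ on $\mathcal{S}_0$, the set $\{s:\rho(s)<Q_{\mathsf{F}}\}$ is, before any recovery, an interval of the form $[s_0,\cdot)$. Hence the existence of such a window reduces to checking $\rho$ at the left endpoint of the last admissible window, $s=t+\Delta_{\mathsf{T}}-\Delta_{\mathsf{F}}$. This gives the approximation $J_{\mathcal{A}}(t)\approx\zeta_{\mathcal{A}}(t+\Delta_{\mathsf{T}}-\Delta_{\mathsf{F}})$ in \eqref{eq:func_J_A}. The convention $\zeta_{\mathcal{A}}(s)=Q_{\mathsf{F}}$ for $s<0$ handles the case $\Delta_{\mathsf{F}}>t+\Delta_{\mathsf{T}}$: there no full RLF window fits, so $F_I>t+\Delta_{\mathsf{T}}$ and the Heaviside factor equals $1$. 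Boundary ties where $\rho$ equals $Q_{\mathsf{F}}$ have measure zero and are ignored.

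With this equivalence, $\mathcal{L}$ given $\mathcal{A}$ has indicator $\bar{H}(J_{\mathcal{A}}(t)-Q_{\mathsf{F}})\,H(\zeta_{\mathcal{A}}(0)-Q_{\mathsf{F}})$. The ping-pong geometric factor (the bracket involving $q^\pm_{\mathcal{A}}$, $d_{\mathcal{A}}$, $y_{\mathcal{A}}$) simply drops out, because $\mathcal{L}$ imposes no condition on $T_{\mathsf{t}(I),I}$. Integrating against the common measure gives the stated formula.

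The main obstacle is justifying the reduction of the window condition to a single evaluation point. In particular, a unimodal $\rho$ that dips below $Q_{\mathsf{F}}$ and then recovers must not create a detection that the single-point test misses. I would handle this case by case (decreasing versus unimodal $\rho$, and the position of the peak relative to the window), mirroring the theorem's appendix argument. I would also state explicitly that this is where the ``$\approx$'' in $J_{\mathcal{A}}$ enters.
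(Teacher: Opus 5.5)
Your proposal is correct and matches the paper's proof. The paper also writes $\Prb(\mathcal{L})$ via the same expansion as \eqref{eq:prob_P_expansion}, evaluates the conditional probability of $\mathcal{S}_0\cap\{t+\Delta_{\mathsf{T}}\ge F_I\}$ as $H(\zeta_{\mathcal{A}}(0)-Q_{\mathsf{F}})\,\bar{H}(J_{\mathcal{A}}(t)-Q_{\mathsf{F}})$ by complementing \eqref{eq:S0_FI_cond_prob_closed}, and folds $\theta$ onto $[0,\pi]$. The case you flag as the main obstacle does not arise: under Approximation~\ref{approx:cond_expect_SINR} with $\rho(0\mid\mathcal{A})\ge Q_{\mathsf{F}}$, a decreasing or single-peaked $\rho$ cannot recover once it drops below $Q_{\mathsf{F}}$, so your single-point test is exact, and the dip-and-recover behaviour is precisely what the approximation discards.
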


\begin{proof}
See Appendix~\ref{asec:too-late}.
\end{proof}

\section{Numerical Evaluation}\label{sec:numerical_evaluation}

\begin{figure}[!t]
    \centering
    \includegraphics[width=\linewidth]{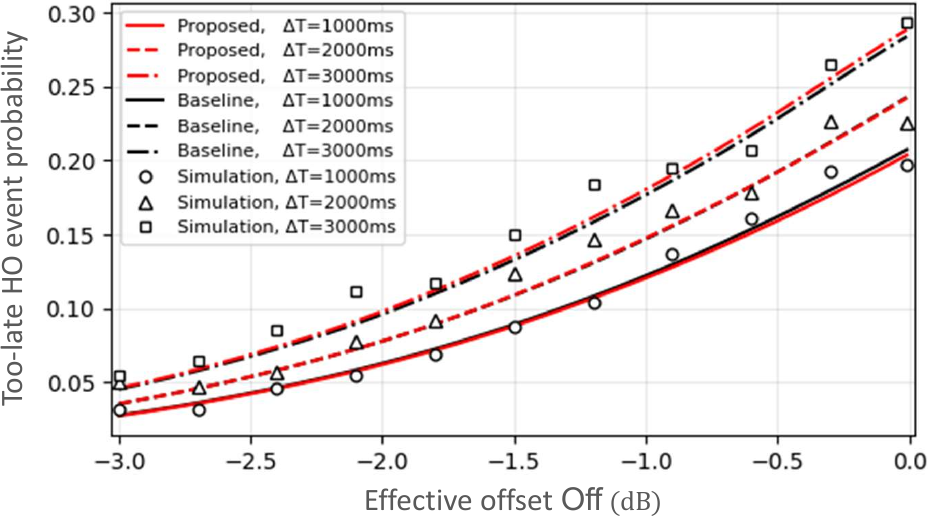}
    \\ (a) $\Prb(\mathcal{L})$. \\
    \vspace{3mm}
    \includegraphics[width=0.99\linewidth]{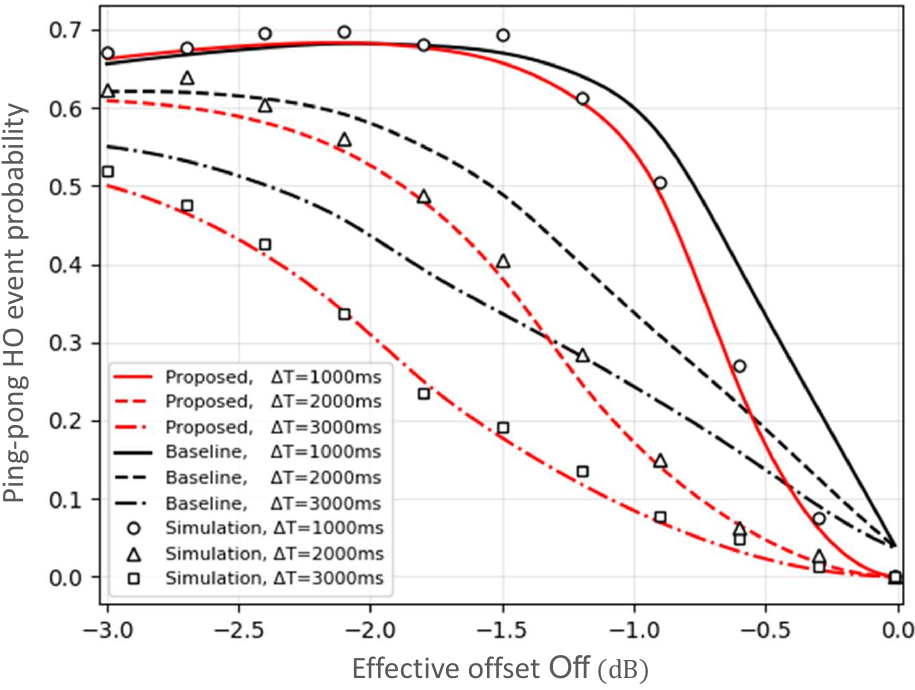}
    \\ (b) $\Prb(\mathcal{P})$.
    \caption{
    Too-late HO and ping-pong HO probabilities versus the effective
    offset $\mathsf{Off}$ for different TTT durations
    $\Delta_{\mathsf{T}}$: proposed analysis, baseline model, and
    Monte Carlo simulation ($\lambda=3~\mathrm{km}^{-2}$).
    }
    \label{fig:2_ho_prob}
\end{figure}

We numerically validate the proposed analytical framework through
comparisons with Monte Carlo simulations and a conventional target-cell
model, and investigate the tradeoff between the too-late and ping-pong
HO probabilities with respect to the TTT duration.
Unless otherwise stated, the parameters are
$v=0.01~\mathrm{km/sec}$, $\beta=4$, $\sigma^2=0$,
$\Delta_{\mathsf{F}}=3000~\mathrm{ms}$, and
$Q_{\mathsf{F}}=0.5$.
The BS density is selected from
$\lambda=1$, $3$, and $10~\mathrm{km}^{-2}$,
the TTT duration from
$0<\Delta_{\mathsf{T}}\leq5000~\mathrm{ms}$,
and the negative effective offset from
$-3\leq\mathsf{Off}<0~\mathrm{dB}$.
These TTT and offset ranges are consistent with the 3GPP specification
\cite{3gppTS36331} and commonly adopted HO margins
\cite{altrad2014doppler}, respectively.

\subsection{Comparison with Simulation and Previous Model}
%\label{sebsec:comp_sim_prevmodel}

Fig.~\ref{fig:2_ho_prob} compares the proposed analytical results with
Monte Carlo simulations and the baseline model for
$\lambda=3~\mathrm{km}^{-2}$ and
$\Delta_{\mathsf{T}}=1000$, $2000$, and $3000~\mathrm{ms}$.
The proposed results are obtained from
Corollary~\ref{cor:too-late} and Theorem~\ref{thm:ping-pong},
whereas the baseline results are obtained by replacing
$f_{\Theta_{\mathsf{t}(I)} \mid \mathcal{N}_0, X_I, T_{I,\mathsf{t}(I)}}$
in \eqref{eq:func_eta} with the baseline density
\eqref{eq:dens_Theta_tS_baseline}.
The Monte Carlo results for
\eqref{eq:ev_too_late} and \eqref{eq:ev_ping_pong}
are averaged over 1000 independent PPP realizations, with the
conditional expectation in \eqref{eq:downlink_RLQ} estimated from
10000 independent realizations of the fading and inner PPP.

For both HO events, the proposed analytical results closely match the
simulations over the considered effective offsets and TTT durations,
despite the approximation errors observed in
Fig.~\ref{fig:1_prob_dens}.
The baseline model also captures the overall trends but exhibits
noticeable discrepancies, particularly for the ping-pong HO probability
at moderately negative offsets
($-1.5~\mathrm{dB}\lesssim\mathsf{Off}\lesssim-0.5~\mathrm{dB}$).
In this region, the selected target cell in
\eqref{eq:ho_time_cell_fin} has a markedly non-uniform angular
distribution, demonstrating the importance of explicitly modeling the
target BS distribution.

\subsection{Joint HO Events and Optimal TTT Timer}
%\label{subsec:joint_events_opt_TTT}

Fig.~\ref{fig:3_tradeoff} shows
$\Prb(\mathcal{L})$, $\Prb(\mathcal{P})$, and their sum
$\Prb(\mathcal{L})+\Prb(\mathcal{P})
=\Prb(\mathcal{L}\cup\mathcal{P})$
versus $\Delta_{\mathsf{T}}$ for
$\lambda=1$, $3$, and $10~\mathrm{km}^{-2}$ and
$\mathsf{Off}=-1~\mathrm{dB}$.
The equality follows from the mutual exclusivity of
$\mathcal{L}$ and $\mathcal{P}$.

As $\Delta_{\mathsf{T}}$ increases, the too-late HO probability
increases while the ping-pong HO probability decreases, illustrating
their inherent tradeoff.
Their sum exhibits a distinct minimum for each BS density, indicating
an optimal TTT duration.
Moreover, the optimum shifts toward smaller TTT values as the BS
density increases because both probabilities vary more rapidly with
TTT in denser deployments.
Thus, the optimal TTT duration depends on the BS density and should be
adapted to the deployment environment.

\begin{figure}[!t]
    \centering
    \includegraphics[width=\linewidth]{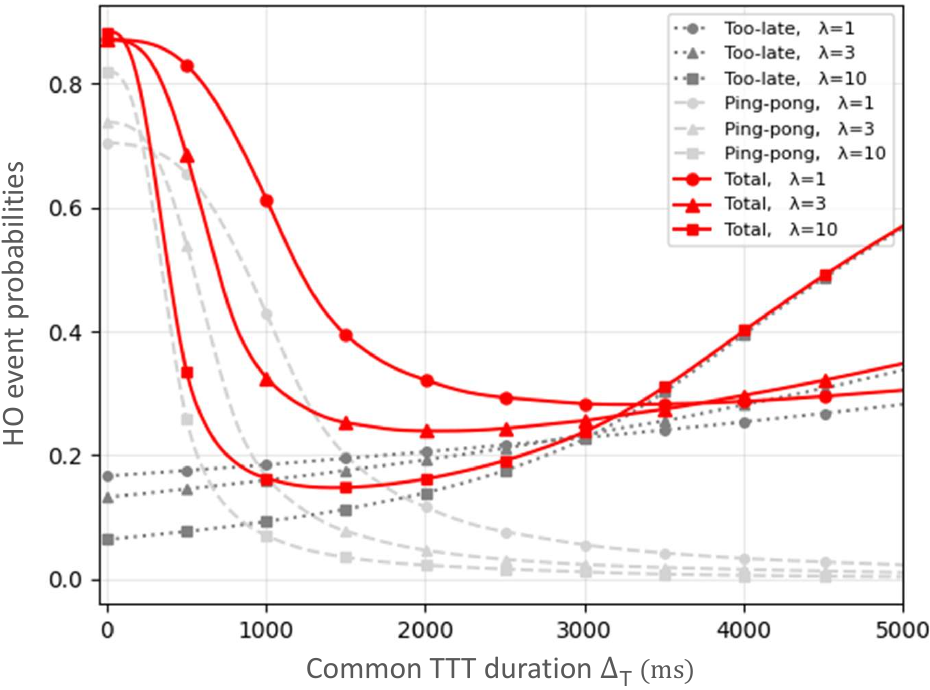}
    \caption{
    Probabilities $\Prb(\mathcal{L})$, $\Prb(\mathcal{P})$, and their
    sum versus the TTT duration $\Delta_{\mathsf{T}}$ for different
    BS densities $\lambda$ $\mathrm{km}^{-2}$.
    }
    \label{fig:3_tradeoff}
\end{figure}

\section{Conclusion}

This paper developed a stochastic geometry-based framework for mobility
robustness analysis by explicitly modeling the spatial distribution of
the target BS induced by UE mobility.
Based on the derived target BS distribution, we analytically evaluated
the too-late and ping-pong HO probabilities under the SINR model with
aggregate interference.
Numerical results demonstrated improved accuracy over the baseline model
considering a uniform target-BS angle and revealed the TTT-dependent
tradeoff between the two HO events and its dependence on BS density.
Future work includes extensions to heterogeneous and UAV-assisted
networks, positive effective offsets, and exact characterization of the
target BS distribution.

%% remaining page estimation test %%
%hello hello hello hello hello hello hello hello hello hello hello hello hello hello hello hello hello hello hello hello hello hello
%hello hello hello hello hello hello hello hello hello hello hello hello hello hello hello hello hello hello hello hello hello hello 
%hello hello hello hello hello hello hello hello hello hello hello hello hello hello hello hello hello hello hello hello hello hello 
%hello hello hello hello hello hello hello hello hello hello hello

\appendix

\subsection{Numerical Justification of Approximation~\ref{approx:cond_expect_SINR}}
\label{asec:assump_cond_expect_SINR}

In this subsection, we numerically investigate the validity of Approximation~\ref{approx:cond_expect_SINR} and its impact on RLF detection and the resulting HO event probabilities.

Table~\ref{tab:summary} summarizes the numerical results. For each $\mathsf{Off}\in\{-1,-2,-3,-4,-5\}~\mathrm{dB}$, we generate $10{,}000$ samples of ($r$, $\theta$), $t$, and $\phi$ according to \eqref{eq:joint_pdf_XI}, \eqref{eq:pdf_TS0}, and \eqref{eq:dens_Theta_tS}, respectively. The other parameters are set to $v=0.01~\mathrm{km/sec}$, $\lambda=1~\mathrm{km}^{-2}$, $\beta=4$, and $\sigma=0$.
For each sample, $\rho(s\mid\mathcal{A})$ is evaluated up to
$s=200~\mathrm{sec}$ with a step size of $100~\mathrm{ms}$.
We examine whether the condition underlying
Approximation~\ref{approx:cond_expect_SINR} holds and whether applying
the approximation changes the RLF detection result.
To evaluate its impact on RLF detection over a range of radio link
quality thresholds, we consider
$Q_{\mathsf{F}}\in\{0.10,0.11,\ldots,1.00\}$ and set
$\Delta_{\mathsf{F}}=3000~\mathrm{ms}$.
For each sample that does not satisfy the condition underlying
Approximation~\ref{approx:cond_expect_SINR}, the RLF detection result
is regarded as changed if, for at least one value of $Q_{\mathsf{F}}$,
$\rho(s\mid\mathcal{A})$ falls below $Q_{\mathsf{F}}$ and subsequently
recovers to $Q_{\mathsf{F}}$ or above within the duration of $\Delta_{\mathsf{F}}$.

As shown in Table~\ref{tab:summary}, the condition underlying
Approximation~\ref{approx:cond_expect_SINR} fails for $27.5$--$32.2\%$
of the samples. Nevertheless, the RLF detection result is changed by
the approximation for only $0.5$--$0.8\%$ of the samples, even when
considering the range $Q_{\mathsf{F}}\in[0.1,1.0]$.
This indicates that a violation of the condition underlying
Approximation~\ref{approx:cond_expect_SINR} does not necessarily lead
to an error in RLF detection. This is because deviations from
monotonic or unimodal behavior affect RLF detection only when they
cause the radio link quality to cross a threshold and subsequently
recover above it within the interval of duration
$\Delta_{\mathsf{F}}$.

The limited impact of Approximation~\ref{approx:cond_expect_SINR} is also observed in the resulting HO event probabilities. The results for $\lambda=3~\mathrm{km}^{-2}$ in Fig.~\ref{fig:2_ho_prob} show that the proposed analysis reasonably captures both $\Prb(\mathcal{L})$ and $\Prb(\mathcal{P})$ obtained by Monte Carlo simulation. Fig.~\ref{afig:ho_prob} further shows a similar agreement for $\lambda=1~\mathrm{km}^{-2}$ over different values of $\mathsf{Off}$ and $\Delta_{\mathsf{T}}$. These results indicate that, although the condition underlying Approximation~\ref{approx:cond_expect_SINR} does not always hold on a sample-by-sample basis, its impact on RLF detection and the resulting HO event probabilities remains limited, thereby providing numerical support for the use of Approximation~\ref{approx:cond_expect_SINR}.

\begin{table}[t]
\centering
\caption{Validity of Approximation~\ref{approx:cond_expect_SINR} and its impact on RLF detection.}
\label{tab:summary}
\begin{tabular}{c|cc|cc}
\hline
 & \multicolumn{2}{c|}{Condition for Approx.~\ref{approx:cond_expect_SINR}}
 & \multicolumn{2}{c}{Impact on RLF detection} \\
 & Holds & Fails
 & Unchanged & Changed \\
\hline
$\mathsf{Off} = -1$ & 67.8\% & 32.2\% & 99.2\% & 0.8\% \\
$\mathsf{Off} = -2$ & 68.5\% & 31.5\% & 99.4\% & 0.6\% \\
$\mathsf{Off} = -3$ & 70.6\% & 29.4\% & 99.5\% & 0.5\% \\
$\mathsf{Off} = -4$ & 71.5\% & 28.5\% & 99.2\% & 0.8\% \\
$\mathsf{Off} = -5$ & 72.5\% & 27.5\% & 99.2\% & 0.8\% \\
\hline
\end{tabular}
\end{table}

\begin{figure}[!t]
    \centering
    \includegraphics[width=\linewidth]{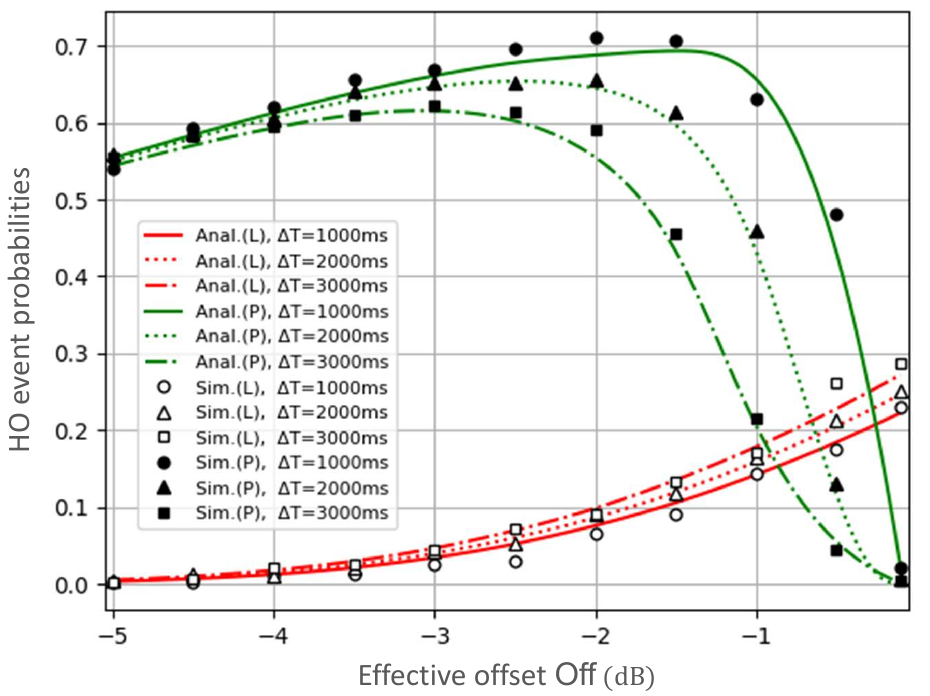}
    \caption{
    Too-late HO and ping-pong HO probabilities versus the effective
    offset $\mathsf{Off}$ for different TTT durations
    $\Delta_{\mathsf{T}}$: proposed analysis and Monte Carlo simulation ($\lambda=1~\mathrm{km}^{-2}$).
    }
    \label{afig:ho_prob}
\end{figure}

%%% コメントアウト（Appendix~B）―開始― %%%
%\begin{comment}

\subsection{Supplementary Proofs}

\subsubsection{Proof of Proposition~\ref{prop:pdf_TilTheta}} \label{asec:pdf_TilTheta}

On the event $T_{I,\mathsf{t}(I)} \in (t,t+\delta_t]$ for $t > 0$, 
let $R_{\mathsf{t}(I)}(\bm{u}_t)$ and $\Theta_{\mathsf{t}(I)}(\bm{u}_t)$ denote the magnitude and polar angle of 
$X_{\mathsf{t}(I)}-\bm{u}_t$, respectively.
We then consider their distributions, which converge as $\delta_t\to0$ to those of 
$R_{\mathsf{t}(I)}$ and $\Theta_{\mathsf{t}(I)}$, 
since the displacement between $\bm{u}_t$ and $\bm{u}_{T_{I,\mathsf{t}(I)}}$ vanishes as $\delta_t\to0$.

%
%Let $\delta_t > 0$ be sufficiently small.
Define the conditional void probability of the homogeneous PPP $\Phi$ on $B \subset \R^2$ as
\begin{equation*}
  \mathcal{V}(B \mid r, \theta) = \Prb\bigl(\Phi(B) = 0 \mid \mathcal{N}_0, X_I = (r, \theta)\bigr).
\end{equation*}
From the second equality of \eqref{eq:ccdf_TS0}, we have
\begin{align*}
  &\Prb\bigl( T_{I, \mathsf{t}(I)} \in (t, t + \delta_t] \mid \mathcal{N}_0, X_I = (r, \theta) \bigr) 
  \nonumber\\
  &=
  \mathcal{V}
  \Bigl(
    b_{\bm{o}} \bigl(\hat{a} \|X_I\| \bigr)
    \cup 
    b_{\bm{u}_{t}} \bigl(\hat{a} \|X_I - \bm{u}_{t}\| \bigr) 
    \,\, \big| \,\, r, \theta
  \Bigr)
%  \nonumber\\
%  &\quad
%    \times
    \bar{\mathcal{V}}
    \bigl(
      D(\delta_t)
      \, \big| \,\, r, \theta
    \bigr),
\end{align*}
where $\bar{\mathcal{V}}(B \mid r, \theta)$ is defined in \eqref{eq:comp_void_prob_PPP}, and the region $D(\delta_t)$ is
\begin{equation}\label{eq:D_delta_ell}
  D(\delta_t)
  =
  b_{\bm{u}_{t + \delta_t}} \bigl(\hat{a}\|X_I - \bm{u}_{t + \delta_t}\|\bigr)
  \big\backslash 
  b_{\bm{o}} \bigl(\hat{a}\|X_I\| \bigr)
  \big\backslash 
  b_{\bm{u}_{t}} \bigl(\hat{a}\|X_I - \bm{u}_{t}\| \bigr).
\end{equation}
Therefore, given $\mathcal{N}_0$, $X_I = (r, \theta)$, and $T_{I, \mathsf{t}(I)} \in (t, t + \delta_t]$, the point $X_{\mathsf{t}(I)} \in \Phi$ lies in
\begin{fleqn}
\begin{align}\label{eq:D_delta_ell_approx}
  D(\delta_t)
  \approx
  \tilde{D}(\delta_t)
%  &=
%  b_{\bm{u}_{t + \delta_t}} \bigl(\hat{a} \|X_I - \bm{u}_{t + \delta_t}\|\bigr)
%  \backslash 
%  b_{\bm{u}_{t}} \bigl(\hat{a} \|X_I - \bm{u}_{t}\| \bigr)
%  \nonumber\\
  =
  b_{(tv +\delta_t v, 0)}(\hat{a} w_{r \!, tv +\delta_t v \!, \theta})
  \backslash
  b_{(tv, 0)}(\hat{a} w_{r \!, tv \!, \theta}),
\end{align}
\end{fleqn}
where $w_{r \!, tv \!, \theta} = \sqrt{r^2 + t^2v^2 - 2rtv\cos\theta}$. 
The approximate region $\tilde{D}(\delta_t)$ is obtained by ignoring the subtraction of $b_{\bm{o}}(\hat{a} r)$.

Let $\bm{z} \in \tilde{D}(\delta_t)$, and set $\bm{\ell} = (tv, 0)$, and $\bm{\ell}^+ = (tv + \delta_t v, 0)$.
Let $l_{\bm{z}}$ denote the half-line emanating from $\bm{\ell}$ and passing through $\bm{z}$, and write $\bm{z} - \bm{\ell} = (r_{\bm{z}}, \theta_{\bm{z}})$.
%Note that $|\theta_{\bm{z}}| < \varphi_{r \!, \ell, \theta}$, where $\varphi_{r \!, \ell, \theta}$ is given in \eqref{eq:vphi_cosinv}.
%
Let $\bm{z}_0$ and $\bm{z}^+$ denote the inner and outer intersection points of $l_{\bm{z}}$ with the boundary $\partial \tilde{D}(\delta_t)$, respectively, and define $h_{\theta_{\bm{z}}}\!(\delta_t) = \|\bm{z}^+ - \bm{z}_0\|$.
%, and define $\|\bm{z}_0 - \bm{\ell}\| = r_0$ and $\|\bm{z}^+ - \bm{\ell}\| = r_0 + h_{\theta_z}(\delta_\ell)$.
Due to \eqref{eq:D_delta_ell_approx},
\begin{align*}
\|\bm{z}^+ - \bm{\ell}^+\| = \hat{a} w_{r \!, tv + \delta_t v \!, \theta},
\quad
%\nonumber\\
\|\bm{z}_0 - \bm{\ell}\| =  \hat{a} w_{r \!, tv \!, \theta}.
\end{align*}
%
%Since the three points $\bm{\ell}$, $\bm{\ell}^+$, and $\bm{z}^+$ form a triangle, we have
Using the law of cosines applied to the triangle with vertices
$\bm{z}^+$, $\bm{\ell}$, and $\bm{\ell}^+$,
with angle $|\theta_{\bm{z}}|$,
and solving for $h_{\theta_{\bm{z}}}(\delta_t)$,
we obtain
%
%\begin{fleqn}
\begin{align}\label{eq:h_thetaz_sol}
&h_{\theta_{\bm{z}}}\!(\delta_t)
\nonumber\\
&=
- (\hat{a}w_{r \!, tv \!, \theta} - \delta_t v\cos\theta_{\bm{z}})
+ \sqrt{
  \hat{a}^2 w_{r \!, tv + \delta_t v \!, \theta}^2  - \delta_t^2 v^2 \sin^2\theta_{\bm{z}} 
}
\nonumber\\
&=\delta_t v(\cos\theta_{\bm{z}} - \hat{a}\cos\varphi_{r \!, tv \!, \theta}) + o(\delta_t), \quad \delta_t \rightarrow 0,
\end{align}
%\end{fleqn}
%
where we used $\hat{a}w_{r \!, tv \!, \theta} - \delta_t v\cos\theta_{\bm{z}} > 0$ for sufficiently small $\delta_t$, and the first-order expansion
\[
  w_{r \!, tv + \delta_t v \!, \theta} - w_{r \!, tv \!, \theta} = -\delta_t v\cos\varphi_{r \!, tv \!, \theta} + o(\delta_t),
\] 
with $\varphi_{r \!, tv \!, \theta}$ in \eqref{eq:vphi_cosinv}.
%
%
%
%Under $\tilde{T}_{S, \mathsf{f}(S)} \in (t, t + \delta_t]$, 
Since $t$ lies in the support of 
$f_{T_{I, \mathsf{t}(I)} | \mathcal{N}_0}(t \mid r, \theta)$ 
in \eqref{eq:pdf_TS0}, we have
$|\hat{a}r - tv| \le \hat{r}(t)$.
Using $\hat{r}(t) = \hat{a} w_{r \!, tv \!, \theta}$ from \eqref{eq:r_hat_t},
this is equivalent to
$\hat{a}r\cos\theta \le tvA^{-} + r$,
where $A^{-} = (\hat{a}^2 - 1)/(2\hat{a})$.
%
%Since $t$ lies in the support of $f_{T_{I, \mathsf{t}(I)} | \mathcal{N}_0}(t \mid r, \theta)$ in \eqref{eq:pdf_TS0}, we see that 
%$|\hat{a}r - tv| \le \hat{r}(t)$,
%which is equivalent to,
%by $\hat{r} = \hat{a} w_{r \!, tv \!, \theta}$ from \eqref{eq:r_hat_t},
%$\hat{a}r\cos\theta \le tvA^{\!-} + r$, with $A^{\!-} = (\hat{a}^2 - 1)/(2\hat{a})$.
%
Hence, using \eqref{eq:vphi_cosinv} and
$A^{\!-} < \hat{a} - 1$ for $\hat{a} > 1$,
we have
\[
  \hat{a} \cos\varphi_{r \!, tv \!, \theta} < 1.
\]
Since $\cos\theta_{\bm{z}} > \hat{a}\cos\varphi_{r \!, tv \!, \theta}$ from \eqref{eq:h_thetaz_sol}, 
it follows that
$
\theta_{\bm{z}} \in (-\alpha_{r \!, tv \!, \theta}, \alpha_{r \!, tv \!, \theta}),
$
where $\alpha_{r \!, tv \!, \theta}$ is given in \eqref{eq:alpha_cosinv}.

Let $|\cdot|$ denote the two-dimensional Lebesgue measure.
%, and let $\hat{r}(t)$ as given in \eqref{eq:r_hat_t}.
%$r_0 = \|\bm{z}_0 - \bm{\ell}\| = \hat{a}w_{r \!, tv \!, \theta}$.
%
%
Since $X_{\mathsf{t}(I)}$ lies in $D(\delta_t) \subset \tilde{D}(\delta_t)$,
 and $\|\bm{z}_0 - \bm{\ell}\| = \hat{a} w_{r \!, tv \!, \theta} = \hat{r}(t)$,
it follows that
\begin{equation*}
R_{\mathsf{t}(I)}(\bm{u}_t) \in [\hat{r}(t), \hat{r}(t) + h_{ \Theta_{\mathsf{t}(I)}\!(\bm{u}_t) }(\delta_t) ),
\end{equation*}
where $h_{ \Theta_{\mathsf{t}(I)}\!(\bm{u}_t) }(\delta_t)$ is given in \eqref{eq:h_thetaz_sol}.
Taking the limit $\delta_t \to 0$ in the above relation, we obtain \eqref{eq:R_tS}. 
As $\delta_t \to 0$, the probability of having more than one point 
in $D(\delta_t)$ defined in \eqref{eq:D_delta_ell} is $o(\delta_t)$ (see, e.g., Sec~2.3 of \cite{chiu2013stochastic}).
Hence, given $\mathcal{N}_0$, $X_I = (r, \theta)$, and 
$T_{I, \mathsf{t}(I)} \in (t, t + \delta_t]$, 
$X_{\mathsf{t}(I)}$ is asymptotically uniformly distributed over $D(\delta_t)$.
Therefore, for $\phi \in (-\alpha_{r \!, tv \!, \theta}, \alpha_{r \!, tv \!, \theta})$,

%assuming that
%$\tilde T_{S,\mathsf{f}^{(0)}(S)}$
%is uniformly distributed on $[t,t+\delta_t]$,
%we have
%
%
\begin{fleqn}
\begin{align*}
&\Prb\bigl( \Theta_{\mathsf{t}(I)}(\bm{u}_t) \in [\phi, \phi + \! \dd\phi] \,\,\big|\ \mathcal{N}_0, X_I = (r, \theta), T_{I, \mathsf{t}(I)} \! \in (t, t + \! \delta_t] \bigr)
\nonumber\\
&\approx
\frac
  {\bigl| \,\{ \bm{z} \in \tilde{D}(\delta_t): \theta_{\bm{z}} \in [\phi, \phi + \dd\phi] \}\, \bigr|}
  {\big| \, \tilde{D}(\delta_t) \, \big|}
  + o(1)\dd\phi
%  \quad \dd\phi \rightarrow 0,
\nonumber\\
&=
\frac
  {
   \left(
     \int_{\hat{r}(t)}^{\hat{r}(t) + h_\phi \!(\delta_t)}
       r_{\bm{z}}
     \dd r_{\bm{z}}
%     + o(1)
   \right)
   \dd\phi
  }
  {
   \int_{-\alpha_{r \!, tv \!, \theta}}^{\alpha_{r \!, tv \!, \theta}}
     \int_{\hat{r}(t)}^{\hat{r}(t) + h_{\theta_{\bm{z}}}\!(\delta_t)}
       r_{\bm{z}}
     \dd r_{\bm{z}}
   \dd\theta_{\bm{z}}
  }
  + o(1)\dd\phi
  +\! o(\dd\phi) 
%  \quad \dd\phi \rightarrow 0,
\nonumber\\
&=
\frac
  {
   \left(
     \cos\phi - \hat{a}\cos\varphi_{r \!, tv \!, \theta} + o(1)
   \right)
   \dd\phi
  }
  {
   2\sin\alpha_{r \!, tv \!, \theta} - 2\hat{a}\alpha_{r \!, tv \!, \theta}\cos\varphi_{r \!, tv \!, \theta} + o(1)
  }
  +  o(1)\dd\phi +\! o(\dd\phi) ,
\end{align*}
\end{fleqn}
%
%where we used that 
%$\Prb(\Phi(D(\delta_t))>1\mid X_I=(r,\theta),\,T_{I,\mathsf{t}(I)}\in(t,t+\delta_t])=o(1)$ as $\delta_t\to0$ and the approximation \eqref{eq:D_delta_ell_approx},
%a first-order approximation of the numerator in $\dd\phi$ in the first equality,
%and \eqref{eq:h_thetaz_sol} in the last equality.
where the first step follows from 
$\Prb(\Phi(D(\delta_t))>1\mid \mathcal{N}_0, X_I=(r,\theta),\,T_{I,\mathsf{t}(I)}\in(t,t+\delta_t])=o(1)$ as $\delta_t\to0$
and the approximation \eqref{eq:D_delta_ell_approx},
the second step uses a first-order approximation of the numerator in $\dd\phi$,
and the last step follows from \eqref{eq:h_thetaz_sol}.
Here $o(1)$ (resp., $o(\dd\phi)$) is with respect to 
$\delta_t \to 0$ (resp., $\dd\phi \to 0$).
Dividing both sides by $\dd\phi$ and taking the limits
$\dd\phi \to 0$ followed by $\delta_t \to 0$, we obtain \eqref{eq:dens_Theta_tS}.
\qed\\

\subsubsection{Proof of Lemma~\ref{lem:cond_expect_SINR}} \label{asec:lem_cond_expect_SINR}

By the definition in \eqref{eq:downlink_RLQ} along with the event $\mathcal{A}$ in \eqref{eq:set_conditions}, $\rho(s \mid \mathcal{A}) = \Exp[\mathsf{SINR}_I(t) \mid \mathcal{N}_0, \mathcal{A}]$.
Thus, using \eqref{eq:SINR},
%By the definitions of $\mathsf{RLQ}_{I}(s \mid \mathcal{N}_0)$ and $\mathcal{A}$ in \eqref{eq:downlink_RLQ} and \eqref{eq:set_conditions}, 
%under $\mathcal{A}$, we have
%We use the techniques in \cite{elsawy2013stochastic} to evaluate the conditional expectation of the SINR. Then, 
%$\Exp[\mathsf{SINR}_I(s) \mid \mathcal{A}] = \int_0^\infty \Prb(\mathsf{SINR}_I(s) > z' \mid \mathcal{A}) \dd z'$, 
%where $\mathsf{SINR}_I(s)$ is given in \eqref{eq:SINR}.
%Then, 
%
\begin{align}\label{eq:cond_SINR_eval}
&\rho(s \mid \mathcal{A})
%\nonumber\\
%&
%= \Exp[\mathsf{SINR}_I(s) \mid \mathcal{N}_0, \mathcal{A}]
\nonumber\\
&=
\int_0^\infty
\Prb\left(
  \frac
    {H_{I, n(s)} \|(r, \theta) - \bm{u}_s\|^{-\beta}}
    {\sigma^2 + \mathcal{I}_{\text{agg}}}
  > z'
   \ \bigg| \ 
     \mathcal{N}_0, \mathcal{A}
\right)
\dd z'
%\\
%&=
%\int_0^\infty
%  \exp
%  \left(
%    -\sigma^2 z' w_{r \!, sv \!, \theta}^\beta
%  \right)
%  \Big|_{X_I = (r, \theta)}
%\\
%&\qquad\qquad\times
%  \Exp
%  \left[
%    \exp
%%    \ee
%%    ^{
%    \left(
%      -\mathcal{I}_{\text{agg}} \,z' w_{r \!, sv \!, \theta}^\beta
%    \right)
%%    }
%    \ \Big| \
%      \mathcal{A}
%  \right]
%\dd z'
\nonumber\\
&=
\int_0^\infty
  w_{r \!, sv \!, \theta}^{-\beta}
%  \Big|_{X_I = (r, \theta)}
  \exp
  \left(
    -\sigma^2 z
  \right)
  \Exp
  \left[
    \ee
    ^{
%    \left(
      -z \mathcal{I}_{\text{agg}}
%    \right)
    }
    \ \Big| \
      \mathcal{N}_0, \mathcal{A}
  \right]
\dd z,
\end{align}
where $\mathcal{I}_{\text{agg}}$ is defined as
\begin{equation}\label{eq:interference_agg}
  \mathcal{I}_{\text{agg}}
  =
  \sum_{j \in \N\backslash\{I\}} H_{j, n(s)} \|X_j - \bm{u}_s\|^{-\beta}.
\end{equation}
Besides, in the second equality, we apply \cite[Eq.~(5)]{elsawy2013stochastic} with $\frac{\mu}{P_t A} = 1$, 
along with the change of variables
\begin{equation*}
  z = z' \|(r, \theta) - \bm{u}_s\|^\beta = z' w_{r \!, sv \!, \theta}^\beta,
\end{equation*}
where $w_{r \!, sv \!, \theta} = \sqrt{r^2 + s^2v^2 - 2rsv\cos\theta}$.
Under $\mathcal{A}$, 
$\| X_{\mathsf{t}(I)} - \bm{u}_{t} \| = \hat{r}(t)$, and the polar angle of $X_{\mathsf{t}(I)} - \bm{u}_{t}$ is $\phi$.
This means that
\begin{align*}
 \| X_{\mathsf{t}(I)} - \bm{u}_s \|
 &=
 \sqrt{\hat{r}(t)^2 + (s - t)^2 v^2 - 2\hat{r}(t) (s-t)v \cos\phi}
 \\
 &=
 w_{\hat{r}(t), (s-t)v, \phi}.
\end{align*}
Therefore, using \eqref{eq:interference_agg} and the Laplace transform of $H_{j, n(s)} \sim \exp(1)$,
\begin{align}\label{eq:LaplaceT_Iagg_eval}
&\Exp
\left[
  \ee
  ^{
%  \left(
    -z \mathcal{I}_{\text{agg}}
%  \right)
  }
  \ \Big| \
    \mathcal{N}_0, \mathcal{A}
\right]
=
\Exp
\left[
  \prod_{j \in \N\backslash\{I\}}
    \frac
      {1}
      {1 + z \|X_j - \bm{u}_s\|^{-\beta}}
  \ \Bigg| \
    \mathcal{N}_0, \mathcal{A}
\right]
\nonumber\\
&=
\frac
  {1}
  {1 + z \, w_{\hat{r}(t), (s-t)v, \phi}^{-\beta}}
\nonumber\\
&\quad\times
\Exp
\left[
  \prod_{j \in \N\backslash\{I, \mathsf{t}(I)\}}
    \frac
      {1}
      {1 + z \|X_j - \bm{u}_s\|^{-\beta}}
  \ \Bigg| \
    \mathcal{N}_0, \mathcal{A}
\right].
\end{align}
%
%Since $\Phi \backslash \{X_I, X_{\mathsf{t}(I)}\}$ ranges over 
%$\R^2 \backslash b_{\bm{o}}(\hat{a}r) \backslash b_{\bm{u}_t}(\hat{r}(t))$
%under $\mathcal{A}$, 
By the definitions of the events in \eqref{eq:initial_non-triggering_event} and \eqref{eq:set_conditions}, $\Phi \backslash \{X_I, X_{\mathsf{t}(I)}\}$ lies in 
$\R^2 \backslash b_{\bm{u}_0}(\hat{a}r) \backslash b_{\bm{u}_t}(\hat{r}(t))$, where $\bm{u}_t = (tv, 0)$.
Thus, the generating functional of a PPP (e.g., \cite[Example~9.4(c)]{daley2008point}) gives
\begin{align}\label{eq:apply_PGFL}
&\Exp
\left[
  \prod_{j \in \N\backslash\{I, \mathsf{t}(I)\}}
    \frac
      {1}
      {1 + z \|X_j - \bm{u}_s\|^{-\beta}}
  \ \Bigg| \
    \mathcal{N}_0, \mathcal{A}
\right]
\nonumber\\
&=
\exp
\left(
  -\lambda z 
  \int_{\R^2 \backslash b_{\bm{u}_0}(\hat{a}r) \backslash b_{\bm{u}_t}(\hat{r}(t))}
    \frac{1}{z + \|\bm{x} - \bm{u}_s\|^\beta}
  \dd \bm{x}
\right)
\nonumber\\
&=
\ee^{-\pi \lambda K_\beta z^{2/\beta}}
\exp
\left(
  \lambda z 
  \int_{b_{\bm{o}}(\hat{a}r) \cup b_{(tv, 0)}(\hat{r}(t))}
    \frac{1}{z + \|\bm{x} - \bm{u}_s\|^\beta}
  \dd \bm{x}
\right),
\end{align}
where $K_\beta$ is given in \eqref{eq:K_beta}. The last line follows since
\begin{equation*}
\exp\left(
-\lambda z \!
\int_{\R^2}
  \frac{1}{z + \|\bm{x} - \bm{u}_s\|^\beta}
\dd \bm{x}
\right)
=
\exp\left(
-\lambda z \!
\int_{\R^2}
  \frac{1}{z + \|\bm{x}\|^\beta}
\dd \bm{x}
\right),
\end{equation*}
to which we apply the last equality of \cite[Eq.~(26)]{tokuyama2024periodic}.

Since $t$ lies in the support of the PDF in \eqref{eq:pdf_TS0}, we have either the case
$|\hat{a}r - tv| \le \hat{r}(t) \le \hat{a}r + tv$,
or
the case $\hat{a}r + tv < \hat{r}(t)$,
in which $b_{\bm{o}}(\hat{a}r) \subset b_{(tv, 0)}(\hat{r}(t))$.
Therefore, similar techniques to the ones in \cite[Eq.~(32)]{tokuyama2024periodic} yield
\begin{align*}
&\int_{b_{(tv, 0)}(\hat{r}(t)) \backslash b_{\bm{o}}(\hat{a}r)}
  \frac{1}{z + \|\bm{x} - \bm{u}_s\|^\beta}
\dd \bm{x}
\\
&=
2\int_{\hat{a}r}^{tv + \hat{r}(t)} \!\!\!\!
  \int_0^{\cos^{-1} \left(-1 \vee \frac{x^2 + t^2v^2 - \hat{r}(t)^2}{2 x tv} \right)} \!
    \frac
      {x}
      {z + w_{x \!, sv \!, \psi}^{\,\beta}}
  \dd \psi
\dd x.
\end{align*}
By the above equation together with
\[
\int_{b_{\bm{o}}(\hat{a}r)}
  \frac{1}{z + \|\bm{x} - \bm{u}_s\|^\beta}
\dd \bm{x}
=
2\int_0^{\hat{a}r} \!\!\!
  \int_0^\pi \!
    \frac
      {x}
      {z + w_{x \!, sv \!, \psi}^{\,\beta}}
  \dd \psi
\dd x,
\]
we have
\begin{equation}\label{eq:mu_st_eval}
\exp
\left(
  \lambda z 
  \int_{b_{\bm{o}}(\hat{a}r) \cup b_{(tv, 0)}(\hat{r}(t))}
    \frac{1}{z + \|\bm{x} - \bm{u}_s\|^\beta}
  \dd \bm{x}
\right)
=
\mu(z, s, t),
\end{equation}
where $\mu(z, s, t)$ is given in \eqref{eq:func_mu_st}.
Plugging \eqref{eq:LaplaceT_Iagg_eval}, \eqref{eq:apply_PGFL}, and \eqref{eq:mu_st_eval} into \eqref{eq:cond_SINR_eval} leads to the result.
\qed\\

\subsubsection{Proof of Theorem~\ref{thm:ping-pong}}\label{asec:ping-pong}

By \eqref{eq:joint_pdf_XI} and \eqref{eq:cond_prob_N0},
and using the PDFs in \eqref{eq:pdf_TS0} and \eqref{eq:dens_Theta_tS},
$\Prb(\mathcal{P})$ defined in \eqref{eq:ev_ping_pong} is expressed as
\begin{align}\label{eq:prob_P_expansion}
\Prb(\mathcal{P})
&=
\int_0^\infty \!\!\!
\int_{-\pi}^{\pi} \!
\int_0^\infty \!\!\!
\int_{-\alpha_{r \!, tv \!, \theta}}^{\alpha_{r \!, tv \!, \theta}} 
\lambda r \ee^{-\pi\lambda\hat{a}^2 r^2}
\nonumber\\
&\quad\times
f_{T_{I,\mathsf{t}(I)} \mid \mathcal{N}_0, X_I}(t \mid r,\theta)\,
f_{\Theta_{\mathsf{t}(I)} \mid \mathcal{N}_0, X_I, T_{I,\mathsf{t}(I)}}(\phi \mid r,\theta,t)
\nonumber\\
&\quad\times
\Prb\bigl(
  \mathcal{S}_0
  \cap
  \{T_{\mathsf{t}(I),I} = t + \Delta_{\mathsf{T}} < F_I\}
  \mid
  \mathcal{N}_0, \mathcal{A}
\bigr)
\nonumber\\
&\quad\times \dd \phi\, \dd t\, \dd \theta\, \dd r,
\end{align}
where $\mathcal{A}$ is given in \eqref{eq:set_conditions}. 
%%$
%%\tilde{\mathcal{A}}
%%=
%%\{
%%  X_I = (r, \theta), T_{I,\mathsf{t}(I)} = t, \Theta_{\mathsf{t}(I)} = \phi
%%\}
%%$.
%
%%We see from \eqref{eq:polar_tI_at_T} and \eqref{eq:R_tS} that $\tilde{\mathcal{A}} \subset \mathcal{A}$, which is defied in \eqref{eq:set_conditions}, and therefore $X_I$ and $X_{\mathsf{t}(I)}$ are given under $\tilde{\mathcal{A}}$.
%
Since $T_{\mathsf{t}(I), I}$ defined in \eqref{eq:TTT_expire_tI_to_I} is conditionally independent of $F_I$ and $\mathcal{S}_0$ under $\mathcal{A}$, we have
\begin{align}\label{eq:cond_indep_decomp}
&\Prb\bigl(
  \mathcal{S}_0
  \cap
  \{T_{\mathsf{t}(I),I} = t + \Delta_{\mathsf{T}} < F_I\}
  \mid
  \mathcal{N}_0, \mathcal{A}
\bigr)
\nonumber\\
&=
\Prb\bigl(T_{\mathsf{t}(I),I} = t + \Delta_{\mathsf{T}}\mid \mathcal{A}\bigr)
%\,
\Prb\bigl(
  \mathcal{S}_0
  \cap
  \{t + \Delta_{\mathsf{T}} < F_I\}
  \mid
  \mathcal{N}_0, \mathcal{A}
\bigr),
\end{align}
where the conditioning on $\mathcal{N}_0$ is omitted in the first term, since $T_{\mathsf{t}(I),I}$ depends only on $X_I$, $X_{\mathsf{t}(I)}$, and $T_{I,\mathsf{t}(I)}$.

By \eqref{eq:TTT_expire_tI_to_I}, under $\mathcal{A}$,
$T_{\mathsf{t}(I),I} = t + \Delta_{\mathsf{T}}$
holds if and only if
\begin{equation}\label{eq:ping-pong_equiv1}
\hat{a} \|X_{\mathsf{t}(I)} - \bm{u}_s\| \geq \|X_I - \bm{u}_s\|
\quad \forall s \in [t + \Delta_{\mathsf{T}}, t + 2\Delta_{\mathsf{T}}).
\end{equation}
Let $\tilde{\bm{x}}_{I} = X_I - \bm{u}_t$, and $\tilde{\bm{x}}_{\mathsf{t}(I)} = X_{\mathsf{t}(I)} - \bm{u}_t$.
%\[
%  \bm{x}_{I} := X_I - \bm{u}_t,
%  \qquad
%  \bm{x}_{\mathsf{t}(I)} := X_{\mathsf{t}(I)} - \bm{u}_t.
%\]
%%
Under $\mathcal{A}$, they are expressed as
\begin{align*}
  &\tilde{\bm{x}}_I = (r, \theta) - (tv, 0) = (w_{r \!, tv \!, \theta}, \varphi_{r \!, tv \!, \theta}),
  %\label{eq:x_I_fixed}
  \\
  &\tilde{\bm{x}}_{\mathsf{t}(I)} = (\hat{a} w_{r \!, tv \!, \theta}, \phi),
  %\label{eq:x_tI_fixed}
\end{align*}
where \eqref{eq:r_hat_t} is used in the second line, and $\varphi_{r \!, tv \!, \theta}$ is given in \eqref{eq:vphi_cosinv}.
%, with 
%$w_{r \!, tv \!, \theta} = \sqrt{r^2 + t^2 v^2 - 2rtv\cos\theta}$.
%
%
Substituting these expressions into \eqref{eq:ping-pong_equiv1} and using $\bm{u}_s = (sv, 0)$ yields
\begin{equation*}%\label{eq:ping-pong_equiv2}
\hat{a} \|\tilde{\bm{x}}_{\mathsf{t}(I)} - (sv, 0)\| \geq \|\tilde{\bm{x}}_I - (sv, 0)\|
\quad \forall s \in [\Delta_{\mathsf{T}}, 2\Delta_{\mathsf{T}}).
\end{equation*}
This implies that
%%, under $\tilde{\mathcal{A}}$,
%the event
%\[
$\{ T_{\mathsf{t}(I),I} = t + \Delta_{\mathsf{T}} \mid \mathcal{A}\}$
%\]
is equivalent to the event that the interval
$[\Delta_{\mathsf{T}} v, 2\Delta_{\mathsf{T}} v)$
does not intersect the following region
%$\tilde{D}_{\mathsf{t}(I), I}$ defined as 
%
\begin{equation*}
  \tilde{D}_{\mathsf{t}(I), I} 
  = 
  \left\{
    \bm{z} \in \R^2: 
    \hat{a} \|\tilde{\bm{x}}_{\mathsf{t}(I)} - \bm{z}\| 
    < 
    \|\tilde{\bm{x}}_{I} - \bm{z}\| 
  \right\},
\end{equation*}
which is the interior of a circle with radius $d_{\! \mathcal{A}}$
given in \eqref{eq:radius_apollonius} and centered at
$(x_{\! \mathcal{A}}, y_{\! \mathcal{A}})$ in Cartesian coordinates,
as defined in \eqref{eq:center_apollonius}\footnote{These results of the radius and the center can be derived using elementary geometry; the derivation is omitted for conciseness.}.
%
%
%
%
%
%Under $X_I = \tilde{\bm{x}}_I + \bm{u}_t$ and $X_{\mathsf{t}(I)} = \tilde{\bm{x}}_{\mathsf{t}(I)} + \bm{u}_t$,
%$T_{\mathsf{t}(I), I}$ is conditionally independent of the events $F_I$ and $\mathcal{S}_0$ defined in \eqref{eq:RLF_expire_i} and \eqref{eq:initial_in-sync_event}.
%
%
%
%
%\textcolor{blue}
%{
%Thus,
%%
%\begin{align}\label{eq:cond_indep_decomp}
%&\Prb\bigl(
%  \mathcal{S}_0
%  \cap
%  \{T_{\mathsf{t}(I),I} = t + \Delta_{\mathsf{T}} < F_I\}
%  \mid
%  \mathcal{A}
%\bigr)
%\nonumber\\
%&=
%\Prb\bigl(T_{\mathsf{t}(I),I} = t + \Delta_{\mathsf{T}}\mid \mathcal{A}\bigr)
%\,
%\Prb\bigl(
%  \mathcal{S}_0
%  \cap
%  \{t + \Delta_{\mathsf{T}} < F_I\}
%  \mid
%  \mathcal{A}
%\bigr),
%\end{align}
%%
%}
%%
Therefore, by
$q_{\! \mathcal{A}}^\pm = x_{\! \mathcal{A}} \pm \sqrt{d_{\! \mathcal{A}}^2 - y_{\! \mathcal{A}}^2}$,
\begin{align}\label{eq:T_tI_cond_prob_closed}
  &\Prb\bigl(T_{\mathsf{t}(I), I} = t + \Delta_{\mathsf{T}} \,|\, \mathcal{A} \bigr)
%  \nonumber\\
%  &=
%  \Prb(d_{\! \mathcal{A}} < y_{\! \mathcal{A}})
%  +
%  \Prb\bigl(
%    d_{\! \mathcal{A}} > y_{\! \mathcal{A}}, 
%    [\Delta_{\mathsf{T}} v, 2\Delta_{\mathsf{T}} v)
%    \cap
%    (q_{\! \mathcal{A}}^-, q_{\! \mathcal{A}}^+)
%    =
%    \emptyset
%  \bigr)
  \nonumber\\
  &=
  1 -
  H\bigl(
    \min\{ 2\Delta_{\mathsf{T}} v,\, q_{\! \mathcal{A}}^+ \}
    -
    \max\{ \Delta_{\mathsf{T}} v,\, q_{\! \mathcal{A}}^- \}
  \bigr)
  H\bigl(
    d_{\! \mathcal{A}} - y_{\! \mathcal{A}}
  \bigr),
\end{align}
which follows from the fact that
$[a, b] \cap [c, d] \neq \emptyset$ if and only if
$\max\{a, c\} \le \min\{b, d\}$.

%Next consider $\mathcal{S}_0 \cap \{t + \Delta_{\mathsf{T}} < F_I\}$ under $\mathcal{N}_0$ and $\mathcal{A}$.
%
%By Assumption~\ref{assump:cond_expect_SINR} and the definitions of $\mathcal{S}_0$ and $F_I$ in \eqref{eq:initial_in-sync_event} and \eqref{eq:RLF_expire_i},
By the definitions in \eqref{eq:RLF_expire_i} and \eqref{eq:initial_in-sync_event},
under $\mathcal{N}_0$ and $\mathcal{A}$, $\mathcal{S}_0 \cap \{t + \Delta_{\mathsf{T}} < F_I\}$ occurs if and only if
\begin{align*}
  &\rho(0 \mid \mathcal{A}) \ge Q_{\mathsf{F}},
  \\
  &\text{and either }
  t + \Delta_{\mathsf{T}} < \Delta_{\mathsf{F}},
  \\
  &\text{or }
  \Bigl\{
    t + \Delta_{\mathsf{T}} \ge \Delta_{\mathsf{F}}
    \ \text{ and } 
    \\
    &\quad
    \forall \tau \in [\Delta_{\mathsf{F}}, t + \Delta_{\mathsf{T}}],\
    \exists s \in [\tau-\Delta_{\mathsf{F}}, \tau):
    \rho(s \mid \mathcal{A}) \ge Q_{\mathsf{F}}
  \Bigr\}.
\end{align*}
%
%
%For analytical tractability, we approximate the above interval-wise condition by evaluating the radio link quality at the beginning of the interval and at $(t+\Delta_{\mathsf{T}})$, i.e.,
%\[
%\rho(t+\Delta_{\mathsf{T}}-\Delta_{\mathsf{F}}\mid\mathcal{A})\ge Q_{\mathsf{F}}
%\quad\text{and}\quad
%\rho(t+\Delta_{\mathsf{T}}\mid\mathcal{A})\ge Q_{\mathsf{F}},
%\]
By Approximation~\ref{approx:cond_expect_SINR}, under $\rho(0 \mid \mathcal{A}) \ge Q_{\mathsf{F}}$, it holds that
\[
\begin{aligned}
&\forall \tau\in[\Delta_{\mathsf{F}},t+\Delta_{\mathsf{T}}],
\exists s\in[\tau-\Delta_{\mathsf{F}},\tau):
\rho(s\mid\mathcal{A})\ge Q_{\mathsf{F}},
\\
&\quad\approx
%\Bigl\{
\rho(t+\Delta_{\mathsf{T}}-\Delta_{\mathsf{F}}\mid\mathcal{A})
\ge Q_{\mathsf{F}}.
%\Bigr\}
%\ \text{or} \
%\Bigl\{
%\rho(t+\Delta_{\mathsf{T}}\mid\mathcal{A})
%\ge Q_{\mathsf{F}}
%\Bigr\}.
\end{aligned}
\]
Therefore, we have
\begin{align}\label{eq:S0_FI_cond_prob_closed}
&\Prb\bigl(
  \mathcal{S}_0
  \cap
  \{t + \Delta_{\mathsf{T}} < F_I\}
  \mid
  \mathcal{N}_0, \mathcal{A}
\bigr)
\nonumber\\
&=
%H\bigl(
%\min\{
%  \zeta_{\mathcal{A}}(0),\;
%  \zeta_{\mathcal{A}}(t + \Delta_{\mathsf{T}} - \Delta_{\mathsf{F}})
%\}
%-
%Q_{\mathsf{F}}
%\bigr)
H\bigl(
\zeta_{\mathcal{A}}(0)
-
Q_{\mathsf{F}}
\bigr)
H\bigl(
J_{\mathcal{A}}(t)
-
Q_{\mathsf{F}}
\bigr),
\end{align}
where $\zeta_{\mathcal{A}}(0)$ and $J_{\mathcal{A}}(t)$ are given by \eqref{eq:func_zeta} and approximately by \eqref{eq:func_J_A}, respectively.

Consequently, substituting \eqref{eq:cond_indep_decomp}, 
\eqref{eq:T_tI_cond_prob_closed}, and \eqref{eq:S0_FI_cond_prob_closed} into 
\eqref{eq:prob_P_expansion}, and combining with \eqref{eq:dens_Theta_tS}, 
the result follows by changing the variable to $\theta' = |\theta|$, 
since the integrand is even in $\theta$.
\qed\\

\subsubsection{Proof of Corollary~\ref{cor:too-late}}\label{asec:too-late}

By the definition in \eqref{eq:ev_too_late}, 
$\Prb(\mathcal{L})$ can be expressed similarly to \eqref{eq:prob_P_expansion} as
\begin{align}\label{eq:prob_L_expansion}
\Prb(\mathcal{L})
&=
\int_0^\infty \!\!\!
\int_{-\pi}^{\pi} \!
\int_0^\infty \!\!\!
\int_{-\alpha_{r \!, tv \!, \theta}}^{\alpha_{r \!, tv \!, \theta}} 
\lambda r \ee^{-\pi\lambda\hat{a}^2 r^2}
\nonumber\\
&\quad\times
f_{T_{I,\mathsf{t}(I)} \mid \mathcal{N}_0, X_I}(t \mid r,\theta)\,
f_{\Theta_{\mathsf{t}(I)} \mid \mathcal{N}_0, X_I, T_{I,\mathsf{t}(I)}}(\phi \mid r,\theta,t)
\nonumber\\
&\quad\times
\Prb\bigl(
  \mathcal{S}_0
  \cap
  \{t + \Delta_{\mathsf{T}} \ge F_I\}
  \mid
  \mathcal{N}_0, \mathcal{A}
\bigr)
\,\dd \phi\,\dd t\,\dd \theta\,\dd r.
\end{align}
%
%
%%
%$\Prb\bigl(
%  \mathcal{S}_0
%  \cap
%  \{T_{\mathsf{t}(I),I} = t + \Delta_{\mathsf{T}} < F_I\}
%  \mid
%  \mathcal{A}
%\bigr)$
%in \eqref{eq:prob_P_expansion} with 
%%
%\[
%\Prb\bigl(
%  \mathcal{S}_0
%  \cap
%  \{t + \Delta_{\mathsf{T}} \ge F_I\}
%  \mid
%  \mathcal{A}
%\bigr),
%\]
%%
%
%
%By letting $\theta' = |\theta|$, 
%the result follows by an argument analogous to \eqref{eq:S0_FI_cond_prob_closed}, yielding
Analogously to \eqref{eq:S0_FI_cond_prob_closed}, we have
\begin{align*}%\label{eq:S0_FI_cond_prob_closed_for_L}
&\Prb\bigl(
  \mathcal{S}_0
  \cap
  \{t + \Delta_{\mathsf{T}} \ge F_I\}
  \mid
  \mathcal{N}_0, \mathcal{A}
\bigr)
\nonumber\\
&=
H\bigl(
\zeta_{\mathcal{A}}(0)
-
Q_{\mathsf{F}}
\bigr)
\,
\left[
1 - 
H\bigl(
%\zeta_{\mathcal{A}}(t + \Delta_{\mathsf{T}} - \Delta_{\mathsf{F}})
J_{\mathcal{A}}(t)
-
Q_{\mathsf{F}}
\bigr)
\right]
\nonumber\\
&=
H\bigl(
\zeta_{\mathcal{A}}(0)
-
Q_{\mathsf{F}}
\bigr)
\,
\bar{H}\bigl(
%\zeta_{\mathcal{A}}(t + \Delta_{\mathsf{T}} - \Delta_{\mathsf{F}})
J_{\mathcal{A}}(t)
-
Q_{\mathsf{F}}
\bigr).
\qedhere
\end{align*}
Plugging the above equation into \eqref{eq:prob_L_expansion}, together with \eqref{eq:dens_Theta_tS}, and applying the change of variable $\theta' = |\theta|$, we obtain the result.
\qed

%%% コメントアウト（Appendix~B）―終了― %%%
%\end{comment}

%\section*{Acknowledgment}
%This work was supported in part by a research grant from the
%Okasan-Kato Foundation, Tsu, Japan.
%The author would like to thank Prof. Naoto Miyoshi for valuable discussions.

\end{document}